\newtheorem{prop}{Proposition}
\newtheorem{thm}{Theorem}
\newwrite\bibinl@out
\newenvironment{bibtex}[1][\jobname]{%
  \immediate\openout\bibinl@out #1.bib
  \immediate\write\bibinl@out{\@percentchar generated from `\jobname' starting line \the\inputlineno^^J}%
  \def\verbatim@processline{\immediate\write\bibinl@out{\the\verbatim@line}}%
  \@bsphack\let\do\@makeother\dospecials\catcode`\^^M\active\verbatim@start
}%
{\immediate\closeout\bibinl@out\@esphack}
\providecommand{\href}[2]{#2}
\def\mr@ignsp#1 {\ifx\:#1\@empty\else #1\expandafter\mr@ignsp\fi}%
\newcommand{\multiref}[1]{\begingroup
\xdef\mr@no@sparg{\expandafter\mr@ignsp#1 \: }%
\def\mr@comma{}%
\@for\mr@refs:=\mr@no@sparg\do{\mr@comma\def\mr@comma{,}\ref{\mr@refs}}%
\endgroup}
\renewcommand{\eqref}[1]{(\multiref{#1})}
\newcommand{\namedref}[2]{\hyperref[#2]{#1~\ref*{#2}}}
\newcommand{\secref}{\@ifstar{\namedref{Section}}{\namedref{Sec.}}}
\newcommand{\appref}{\@ifstar{\namedref{Appendix}}{\namedref{App.}}}
\newcommand{\tabref}{\@ifstar{\namedref{Table}}{\namedref{Tab.}}}
\newcommand{\figref}{\@ifstar{\namedref{Figure}}{\namedref{Fig.}}}
\title{All $4\times4$ solutions of the quantum Yang--Baxter equation}
\author{Marius de Leeuw$^1$ and Vera Posch$^1$}
\begin{document}

\begin{titlepage}

\centerline{\large \bf Solving the $4\times 4$ quantum Yang--Baxter equation}
\vskip 2 cm

\centerline{{\bf Marius de Leeuw and Vera Posch}  }

\vskip 1.1cm

\begin{center}
\sl Hamilton Mathematics Institute\\
~School of Mathematics\\
~Trinity College Dublin, 
~Dublin, Ireland\\ \vspace{0.2cm}
\end{center}
\vskip 0.7cm

\centerline{\small\tt \{deleeuwm,poschv\}@tcd.ie}

\vskip 1.7cm \centerline{\bf Abstract} \vskip 0.2cm \noindent
In this paper, we complete the classification of $4\times4$ analytic solutions of the Yang--Baxter equation. Regular solutions were recently classified and in this paper we find the remaining non-regular solutions. We present several new solutions, then consider regular and non-regular Lax operators and study their relation to the quantum Yang--Baxter equation.  We show that for regular solutions there is a correspondence, which is lost in the non-regular case. In particular, we find non-regular Lax operators whose $R$-matrix from the fundamental commutation relations is regular but does not satisfy the Yang-Baxter equation.  These $R$-matrices satisfy a modified Yang--Baxter equation instead.

\end{titlepage}


\section{Introduction}

In this paper, we solve the Yang Baxter Equation (YBE). This equation was named by Faddeev and Takhtajan after Yang and Baxter \cite{TakFaddeev}. Yang and Baxter encountered this equation in different contexts. Yang studied a one-dimensional many-body quantum mechanical scattering \cite{Phys} and Baxter consider the eight vertex model in statistical physics \cite{BAXTER}. In fact, this equation arises in many different settings in various fields in physics and mathematics. For a brief (historical) overview into the different fields in which Yang-Baxter equation arises, see for instance \cite{perk2006yangbaxterequations} or the book by Jimbo \cite{jimbo1990yang}. 

Today, the YBE is central to quantum integrability and more interesting and relevant to mathematical physics and applications than ever.
This paper concerns itself with solving the YBE and classifying its solutions. In general, this is a very daunting challenge. The Yang--Baxter equation is a matrix equation on an object called the $R$-matrix and takes the form
\begin{align}
\label{YBE}
R_{12}(u,v)R_{13}(u,w)R_{23}(v,w)=R_{23}(v,w)R_{13}(u,w)R_{12}(u,v)
\end{align}
with $R:V\otimes V \rightarrow V\otimes V$,  $R_{12}=R\otimes \mathbf{1}$ and $u,v\in \mathbf{C}$. The entries of the $R$-matrix are functions depending on spectral parameters. Hence, the Yang-Baxter equation describes a coupled set of cubic functional equations. Solving this is clearly not a simple task. The Hilbert space $V$ is usually chosen to be $\mathbf{C}^n$ and in this paper we will restrict to $\mathbf{C}^2$. 

When the local Hilbert space is two-dimensional, classification of solutions to Baxter's equations was initiated in \cite{Krichever1981BaxtersEA} using algebraic geometry.  It shows that the most general spectral-parameter dependent solution of the Yang-Baxter equation is given by the eight-vertex R-matrix under the assumption that the curve on which the spectral parameters live is elliptic. More degenerate cases have been studied in subsequent work, such as \cite{Dragovich}.

A special case of the Yang--Baxter equation \eqref{YBE} is where all spectral parameters are taken to be equal $u=v=w$. These correspond to so-called constant solutions. In the case of $V=\mathbf{C}^2$ all constant solutions were classified and listed in \cite{HIETARINTA}. The best known solution is the permutation matrix $P$, which acts as a permutation of vector spaces in a tensor product. The other constant solutions have only recently been picked up again by \cite{Maity:2024vtr}, where they discuss interesting algebraic properties described by these matrices.

In this paper we are interested in non-constant solutions and differentiable solutions. This is mainly motivated by the fact that in the standard construction, the conserved charges are defined using derivatives of the $R$-matrix\footnote{One could also define a family of conserved charges by evaluating the transfer matrix at different values of the spectral parameter which circumvents the differentiability constraint, but we will not consider this in this paper.}. Hence, we consider $R$-matrices that are analytic in the spectral parameters $u$ and $v$
\begin{align}
\label{reg-exp}
    R(u,v) = R^{(0)}+u R^{(1)}+v R^{(2)}+ uv R^{(3)}+ u^2 R^{(4)}+\dots
\end{align}
Due to the fact that the Yang--Baxter equation is not sensitive to the normalisation of $R$, we can always do this as long as the entries are meromorphic functions. In principle, we can solve the Yang--Baxter equation order by order in the spectral parameters. At zeroth order, we find the constant Yang--Baxter equation. Therefore $R^{(0)}$ needs to be one of the constant solutions found in \cite{HIETARINTA}, which we list in Appendix \ref{constantsolutions} for completeness. 

The most studied set of solutions are the so-called regular solutions where $R^{(0)} = P$. These correspond to spin chains with nearest neighbour interactions and usually correspond to $R$-matrices that appear as scattering matrices in integrable quantum field theories. Traditionally these types of solutions have been studied using symmetry arguments \cite{Drinfeld:1985rx,Jimbo:1985zk} or more direct approaches \cite{Vieira_2018} . More recently an approach based on the boost operator was set-up \cite{de_Leeuw_2019,deLeeuw:2020ahe,de_Leeuw_2021}, which resulted in the full classification for $4\times4$ regular solutions of the Yang--Baxter equation \cite{Corcoran:2023zax}.

So, in order to complete the classification of $4\times4$ analytic solutions  we need to consider the remaining constant cases which give non-regular $R$-matrices. In this paper we discuss how these can be extended to non-constant solutions. A first attempt at this was put forward in  \cite{Garkun:2024jnp} using a new algorithmic approach to construct solutions . In this paper we are completing the classification $4\times4$ non-regular analytic solutions of the quantum Yang--Baxter equation. 

With classification we mean that we list all \textit{analytic} solutions to the quantum Yang-Baxter equation with complex spectral parameters up to the following identifications 
\begin{enumerate}
\item Reparameterisations and redefinitions of the (spectral) parameters and free functions
\item Normalisation of the $R$-matrix $R\rightarrow f(u,v) R$
\item Transposition $R\rightarrow R^T$ and parity $R_{ij} \rightarrow R_{ji}$
\item Local basis transformations $R_{ij} \rightarrow V_{i} V_{j} R_{ij} V_{i}^{-1} V_{j}^{-1}$
\end{enumerate}
All of these transformations trivially map between solutions of the Yang-Baxter equation and in our list we will choose one representative of the equivalence class generated by these transformations. In particular, some solutions will depend on free functions $f,g,\ldots$ and we take solutions to be in the same class for any choice of these functions. Our methods are explained and illustrated by an example in Section \ref{Method}. In section \ref{Non-constant Models} all the models have been listed under the above identifications. 

The existence of a quantum integrable model is normally understood as the existence of a tower of commuting conserved charges. These are generated by a Lax operator $L$ which solves the fundamental commutation relations \cite{Faddeev:1987ih,Faddeev:1996iy} 
\begin{align}
R_{12}(u,v)L_{13}(u)L_{23}(v)=L_{23}(v)L_{13}(u)R_{12}(u,v)
\end{align}
By identifying the Lax operator with the $R$-matrix via $L(u)\sim R(u,0)$ this can be seen as a special case of the YBE. Indeed we show that for regular solutions this identification is always possible, so that any integrable model described by a regular Lax operator corresponds to a solution of the Yang--Baxter equation.

However, as we discuss in Section \ref{R-matrices and Lax operators}, this breaks down the non-regular setting. For an integrable model described by a non-regular Lax operator, we find that the fundamental commutation relations can be solved by $R$-matrices that do \textit{not} satisfy the Yang-Baxter equation. Instead they satisfy a modified Yang Baxter equation
%
%
We discuss two explicit examples to illustrate this.

\section{Method}
\label{Method}

In this section, we will explain the framework that we use to lift a constant solution of the Yang--Baxter equation to a solution that depends on spectral parameters. We assume that the spectral parameters are complex numbers, and we will also assume that the entries of the $R$-matrix admits a Laurent series expansion in terms of the spectral parameters. Since we can identify the solution up to overall normalization this means that we can assume that $R$ is analytic and admits a Taylor expansion by multiplying the $R$-matrix by an overall factor that cancels all poles. This condition is also desirable from a physical point of view, where one would define the conserved charges of the corresponding integrable models as logarithmic derivatives of the transfer matrix.

\subsection{Framework}

The key to our derivation is to expand around the known constant solutions. Let us show that at the point of coinciding spectral parameters, the $R$-matrix reduces to one of the constant solutions. This fixes the form of the expansion.

\begin{prop}\label{prop1}
Let $R(u,v)$ be a analytic solution of the Yang-Baxter equation. Then, we can write
\begin{align}\label{eq:Rexp}
R(u,v) = R^{(0)} (u_+) + u_- R^{(1)} (u_+) + \frac{u_-^2}{2} R^{(2)} (u_+) +\ldots,
\end{align}
where $u_- = u-v$ and $u_+ = \frac{u+v}{2}$ and $R^{(0)} (u_+) $ is one of the $R$-matrices from \cite{HIETARINTA} where we replace all constants with analytic functions $a\mapsto a(u_+)$.
\end{prop}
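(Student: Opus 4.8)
The plan is to treat the proposition in three stages: first establish the form of the expansion by a change of spectral variables, then identify the leading coefficient as a constant solution, and finally invoke Hietarinta's classification together with analyticity.

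First I would pass from the variables $(u,v)$ to $(u_+,u_-)$ with $u_+ = \tfrac{u+v}{2}$ and $u_- = u-v$. This is an invertible linear change of coordinates, $u = u_+ + \tfrac{1}{2}u_-$ and $v = u_+ - \tfrac{1}{2}u_-$, so every entry of $R$, being analytic in $(u,v)$, is analytic in $(u_+,u_-)$ as well. Taylor-expanding each entry in $u_-$ about the diagonal $u_-=0$ gives
\begin{align}
R(u,v) = \sum_{n\ge 0} \frac{u_-^n}{n!}\, R^{(n)}(u_+), \qquad R^{(n)}(u_+) := \partial_{u_-}^n R\big|_{u_-=0},
\end{align}
with each $R^{(n)}(u_+)$ analytic in $u_+$. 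This reproduces \eqref{eq:Rexp}; it remains only to identify the leading term $R^{(0)}(u_+)=R(u_+,u_+)$.

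Next I would show that this leading term solves the constant Yang--Baxter equation. Setting $u=v=w$ in \eqref{YBE}, all three factors collapse to the single matrix $\hat R := R(u_+,u_+) = R^{(0)}(u_+)$, and the relation becomes
\begin{align}
\hat R_{12}\,\hat R_{13}\,\hat R_{23} = \hat R_{23}\,\hat R_{13}\,\hat R_{12}.
\end{align}
Thus for every fixed value of $u_+$ the constant matrix $R^{(0)}(u_+)$ is a solution of the constant Yang--Baxter equation, and hence must be one of the matrices classified in \cite{HIETARINTA}, up to a basis transformation $R\mapsto (g\otimes g)\,R\,(g\otimes g)^{-1}$.

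Finally I would upgrade this pointwise statement to a uniform one. Fixing a generic base point $u_+^0$, I choose a constant $g$ bringing $R^{(0)}(u_+^0)$ to one of the Hietarinta normal forms and conjugate the whole family by the same $g\otimes g$; the result is still analytic in $u_+$ and still solves the constant Yang--Baxter equation for each $u_+$. The main obstacle is showing that the family does not jump between the discrete Hietarinta types as $u_+$ varies: each type is characterised by a pattern of vanishing entries together with a set of polynomial constraints, which are locally rigid, so on the connected domain away from the degeneration loci the normal form is preserved and only its free parameters move. Those parameters are then analytic functions of $u_+$, yielding precisely the claimed form in which each constant $a$ of the Hietarinta solution is promoted to an analytic function $a(u_+)$. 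Care is needed at the special values of $u_+$ where $R^{(0)}(u_+)$ degenerates or where $g$ fails to be invertible; there one restricts to the connected component and extends by analytic continuation.
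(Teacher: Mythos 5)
Your proposal follows the same route as the paper's proof: pass to the variables $(u_+,u_-)$, Taylor-expand in $u_-$ (analyticity survives the invertible linear change of variables), and set all spectral parameters equal in the Yang--Baxter equation to conclude that $R^{(0)}(u_+) = R(u_+,u_+)$ solves the constant Yang--Baxter equation for each fixed $u_+$, hence lies in Hietarinta's list. The paper's proof consists of exactly these two observations and then simply asserts that ``all constants need to be promoted to functions.'' Your third stage --- fixing a constant $g$ at a generic base point, arguing that the Hietarinta type cannot jump as $u_+$ varies, and concluding that the free parameters are analytic functions of $u_+$ --- addresses a uniformity issue the paper passes over in silence: pointwise membership in the classification, via a possibly $u_+$-dependent basis transformation $g(u_+)$, is strictly weaker than the stated conclusion, which needs a single spectral-parameter-independent transformation (conjugation by $g(u_+)\otimes g(u_+)$ would not even preserve the Yang--Baxter equation, since the three $R$-factors in it carry different values of $u_+$). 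Be aware that your sketch of this stage is itself not watertight: after conjugating by the constant $g$, the matrices $R^{(0)}(u_+)$ for nearby $u_+$ are only \emph{equivalent} to a Hietarinta normal form, not automatically equal to one, so the claim that the zero pattern and polynomial constraints persist still requires an argument (local rigidity is asserted, not proved). Still, your first two stages already reproduce everything the paper actually proves, and the third flags a genuine subtlety rather than introducing an error.
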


\begin{proof}
Since $R$ is analytic in $u,v$, it will also be analytic in the variable $u_\pm$. In, particular it admits a power series of the form \eqref{eq:Rexp}. Let us now consider the Yang-Baxter equation and set all the spectral parameters equal. We find
\begin{align}
R_{12}(u,u)R_{13}(u,u)R_{23}(u,u)=R_{23}(u,u)R_{13}(u,u)R_{12}(u,u).
\end{align}
This is exactly the constant Yang-Baxter equation whose solutions where classified in \cite{HIETARINTA}. However, this Yang-Baxter equation is satisfied by $R^{(0)}(u_+)$ and hence all constants that appear in \cite{HIETARINTA} need to be promoted to functions. 
\end{proof}

By expanding the Yang-Baxter equation around the points where two of the spectral parameters coincide, we can then derive the following necessary conditions that $R$ needs to satisfy:
\begin{equation}
    \begin{aligned}
         R_{12}^{(0)}(u_+) R_{13}(u,v)R_{23}(u,v)=R_{23}(u,v)R_{13}(u,v)R_{12}^{(0)}(u_+),\\ 
         R_{12}(u,v)R_{13}^{(0)}(u_+)R_{23}(v,u)=R_{23}(v,u)R_{13}^{(0)}(u_+)R_{12}(u,v),\\ 
         R_{12}(u,v)R_{13}(u,v)R_{23}^{(0)}(u_+)=R_{23}^{(0)}(u_+) R_{13}(u,v)R_{12}(u,v).
    \end{aligned}
    \label{3equ}
\end{equation}
Plugging \eqref{eq:Rexp} into \eqref{3equ} and letting $u\rightarrow v$ we can recursively solve for every order of $u-v$ independently. 
In particular, at leading order we find the constant Yang--Baxter equation again. All other orders give the same linear expression for the highest order contribution plus linear or quadratic corrections to the lower order contributions.
For example, at the next order we find the following equations
\begin{align}
\label{R(1)}
R_{12}^{(0)}  R^{(0)}_{13}R^{(1)}_{23}+  R_{12}^{(0)}  R^{(1)}_{13}R^{(0)}_{23}&=R_{23}^{(0)}R_{13}^{(1)}R_{12}^{(0)}+R_{23}^{(1)}R_{13}^{(0)}R_{12}^{(0)},\\ 
R_{12}^{(1)}R_{13}^{(0)}R_{23}^{(0)}- R_{12}^{(0)}R_{13}^{(0)}R_{23}^{(1)}&= R_{23}^{(0)}R_{13}^{(0)}R_{12}^{(1)}- R_{23}^{(1)}R_{13}^{(0)}R_{12}^{(0)},\\ 
R_{12}^{(1)}R_{13}^{(0)}R_{23}^{(0)} + R_{12}^{(0)}R_{13}^{(1)}R_{23}^{(0)}&=R_{23}^{(0)}R_{13}^{(1)}R_{12}^{(0)} + R_{23}^{(0)}R_{13}^{(0)}R_{12}^{(1)}, \label{R(1)C}
\end{align}
where we have omitted the explicit spectral dependence since all the matrices $R^{(i)}_{ab}$ depend on the same spectral parameter. These equations are not all independent as the first two add up to give the third. Notice that for regular $R$-matrices these linear equations are automatically satisfied, but for non-regular $R$-matrices these give non-trivial constraints. We can also see that $R^{(1)} = R^{(0)}$ is automatically a solution. This degree of freedom corresponds to an overall rescaling of our $R$-matrix 
\begin{align}
&R \mapsto (1+a (u-v)) R  && \Rightarrow &(R^0,R^1)\rightarrow (R^0, R^1 +a R^0)
\end{align}
Hence we can set the coefficient proportional to $R^{(0)}$ to 0 by choosing an appropriate normalisation of our $R$-matrix.

As an example of what conditions \eqref{R(1)}-\eqref{R(1)C}  may impose, let us take $R^{(0)} = 1$. In this case these equations give that
\begin{align}
[R_{ij}^{(1)} , R_{kl}^{(1)}]=0,
\end{align}
for all $i,j, k,l$. This basically implies that $R^{(1)}$ needs to be diagonal. Expanding further we get a coupled set of equations that we can solve perturbatively for each choice of $R^{(0)}$. This gives us a set of necessary conditions on the coefficients from the expansion of the $R$-matrix. What then remains is substituting the solution to the above constraints into the Yang--Baxter equation and solve for the remaining degrees of freedom. 

\subsection{Examples}

In order to illustrate our method, let us work out two examples to show the different steps that are needed.  We will start from two constant solutions (which can be found in Appendix \ref{constantsolutions}) as our starting point, namely $R^H$ and $R^A$.

\paragraph{Example $R^H$}

We take the expansion equation \eqref{eq:Rexp} and then solve equations \eqref{R(1)}-\eqref{R(1)C} for the elements of the unknown matrix $R^{(1)}$. The components of $R^{(1)}$ are free functions $f^{(1)}_{i}(u+v)$ that depend on the sum of the spectral parameters. Equations \eqref{R(1)}-\eqref{R(1)C}   then give rise to a set of linear equations in the free functions $f^{(1)}_i$ which can be solved. In this case, we find a solution with a single degree of freedom, which we call $f^{(1)}$ and we get
\begin{align}
    R^{H,(1)}= 
    f^{(1)}
    \begin{pmatrix}
        \cdot & \cdot & \cdot & 1\\
        \cdot & \cdot &1 &  \cdot \\
        \cdot & 1& \cdot & \cdot  \\
       -1& \cdot & \cdot &  \cdot \\
    \end{pmatrix}.
\end{align}
Plugging this back in our expansion(\ref{eq:Rexp}) we can then expand \eqref{3equ} to second order which leads to a system of equations linear in the components of $R^{H,(2)}$ and quadratic in $R^{H,(1)}$. The elements of $R^{H,(2)}$ are expressed in terms of $f^{(1)}$. To be more explicit we find:
\begin{align}
    R^{H,(2)}=
   \frac{1}{2} (  f^{(1)})^2  \begin{pmatrix}
        \cdot & \cdot & \cdot & 1\\
        \cdot & \cdot &1  &  \cdot \\
        \cdot & 1& \cdot & \cdot  \\
       -1& \cdot & \cdot &  \cdot \\
    \end{pmatrix}.
\end{align}
Going to the next higher orders this is a pattern that repeats. $R^{H,(n)}$ will be of the same structure as the previous ones, and it will depend on the elements of the previous $R^{H,(i)}$ ($i<n$). 

At this point, we can recursively show that the corresponding solution to the Yang--Baxter equation should take the form
\begin{align}
\mathcal{R}^H (u,v)= R^H + (f(u,v)-1)  \begin{pmatrix}
        \cdot & \cdot & \cdot & 1\\
        \cdot & \cdot &1 &  \cdot \\
        \cdot & 1& \cdot & \cdot  \\
       -1& \cdot & \cdot &  \cdot \\
    \end{pmatrix}.
\end{align}
 Plugging it into the Yang--Baxter equation we find that it is only satisfied if the functions $f(u,v)$ satisfy the following relation:
 \begin{align}\label{eq:fRH}
 f(u,w)=f(u,v) f(v,w).
 \end{align}
 This can be solved by expanding \eqref{eq:fRH} around $v=u$ and yields that 
\begin{align}
f(u,v) \equiv e^{F(u)-F(v)},
\end{align}
for some holomorphic function $F$. We can then use reparameterization invariance to simply define our spectral parameters as $F(u)\rightarrow u$ and we find our final answer:
\begin{align}
    \mathcal{R}(u,v)= \begin{pmatrix}
        1 & \cdot & \cdot & e^{u-v}\\
        \cdot & 1 &e^{u-v} &  \cdot \\
        \cdot &e^{u-v}& 1 & \cdot  \\
       -e^{u-v}& \cdot & \cdot &  1 \\
    \end{pmatrix},
\end{align}
which coincides with $\mathcal{R}^G$ from our classification below. This $R$-matrix is of difference form and was listed in \cite{Garkun:2024jnp}.

\paragraph{Example $R^A$}
Let us now illustrate a case which is less straightforward by considering $R^A$. We start by considering \eqref{R(1)}, but since $R^A$ depends on three free functions $p,q,s$, we find that we need to distinguish several cases. For instance the space of solutions for $R^{(1)}$ will be different for some special values, such as \textit{e.g.} $p=0=q$. These different cases will lead to different models that have to be studied separately. 

In the remainder of the example, let us continue by selecting the case where $p=q=1$ and $s=-1$.  This leads again to a unique solution to \eqref{R(1)} given by
\begin{align}
\label{RA1}
    R^{A,(1)}(u_+)= \begin{pmatrix}
        \cdot & \cdot & \cdot & f^{(1)}_1(u_+)\\
        \cdot & f^{(1)}_2(u_+) &f^{(1)}_3(u_+) &  \cdot \\
        \cdot &f^{(1)}_4(u_+)& f^{(1)}_5(u_+) & \cdot  \\
        f^{(1)}_6(u_+)& \cdot & \cdot &  f^{(1)}_7(u_+) \\
    \end{pmatrix}.
\end{align}
Plugging this back into our expansion and solving for the second order $R^{(2)}$ we find that  
\begin{align}
\label{RA2}
    R^{A,(2)}(u_+)= \begin{pmatrix}
        \cdot & \cdot & \cdot & f^{(2)}_1(u_+)\\
        \cdot & f^{(2)}_2(u_+) &f^{(2)}_3(u_+) &  \cdot \\
        \cdot &f^{(2)}_4(u_+)& f^{(2)}_5(u_+) & \cdot  \\
        f^{(2)}_6(u_+)& \cdot & \cdot &  f^{(2)}_7(u_+) \\
    \end{pmatrix}.
\end{align}
However, we also find several quadratic equations for the first order free functions $f^{(1)}_i$. These can be solved and we find three different solutions
\begin{align}
    R^{A,(1,1)}(u_+)&= \begin{pmatrix}
        \cdot & \cdot & \cdot & f^{(1)}_1(u_+)\\
        \cdot &\cdot &f^{(1)}_2(u_+) &  \cdot \\
        \cdot &f^{(1)}_2(u_+)&\cdot & \cdot  \\
        f^{(1)}_4(u_+)& \cdot & \cdot &  \cdot \\
    \end{pmatrix},\\
      R^{A,(1,2)}(u_+)&= \begin{pmatrix}
        \cdot & \cdot & \cdot & f^{(1)}(u_+)\\
        \cdot &\cdot& \cdot &  \cdot \\
        \cdot &\cdot& \cdot & \cdot  \\
       \cdot& \cdot & \cdot &  \cdot\\
    \end{pmatrix},
\qquad
 R^{A,(1,3)}(u_+)=   \begin{pmatrix}
        \cdot & \cdot & \cdot & \cdot\\
        \cdot & \cdot &\cdot &  \cdot \\
        \cdot &\cdot& \cdot& \cdot  \\
        f^{(1)}(u_+)& \cdot & \cdot &  \cdot\\
    \end{pmatrix}.
\end{align}
We see that $R^{A,(1,3)}$ and $R^{A,(1,2)}$ are related by transposition, so they do not describe independent models. Hence, in this case there are two possible paths that describe independent solution.

Let us continue with $ R^{A(1,2)}$. Repeating the above procedure we find that the structure recursively repeats itself and leads to the $R$-matrix $ \mathcal{R}^E$ from our classification 
\begin{align}
      \mathcal{R}^E(u,v)= \begin{pmatrix}
       1& \cdot & \cdot & f(u,v)\\
        \cdot &1& \cdot &  \cdot \\
        \cdot &\cdot& 1& \cdot  \\
       \cdot& \cdot & \cdot &  -1\\
    \end{pmatrix}.
\end{align}
This is easily checked to be a solution of the Yang--Baxter equation. Note that in this example we do not find any conditions coming from the YBE on the free function $f(u,v)$. The class of solutions, summarized as model $ \mathcal{R}^E(u,v)$, therefore contains all matrices of the above form, with \textit{any} holomorphic function $f(u,v)$.

\section{Non-regular solutions of the Yang--Baxter equation}
\label{Non-constant Models}

We now present the remaining solutions to the Yang--Baxter equation and we discuss which constant solutions they follow from. \textit{Note that we have used invariance under similarity transformations to present the most compact form.} One can use reparameterizations, redefinitions of the free functions,  similarity transformations and discrete transformations such as transposition to generate more solutions, but those are dependent on the ones listed in this section.

For a complete overview of which constant solutions leads to which analytic solutions we refer to Appendix \ref{completeR}. There we list all constant solutions, including special cases and discuss how they extend to an analytic solution. In what follows we list generic functions as $f,g,h$ and include the explicit functional dependence. Constants are typically denoted by $a,b,c,\ldots$ and do not have any functional dependence.

\subsection{Rank 4}

We first  list the invertible solutions.

\paragraph{Diagonal solution}
There is the obvious diagonal solution to the Yang--Baxter equation
\begin{align}
    \mathcal{R}^{A}(u,v)=\begin{pmatrix}
        f_1(u,v) &\cdot&\cdot&\cdot\\
        \cdot&f_2(u,v)&\cdot&\cdot\\
        \cdot&\cdot&f_3(u,v)&\cdot\\
        \cdot&\cdot&\cdot&f_4(u,v)
    \end{pmatrix}.
\end{align}

This model can be found from any of the constant solutions that can be reduced to a diagonal one. This matrix is invertible if all the functions along the diagonal are non-zero. Of course, even in the case where some functions vanish, this still gives a solution of the Yang--Baxter equation, albeit of rank lower than 4. Notice that solutions in this class are described by four free holomorphic functions and the different choices of functions can lead to different physical properties, but they all belong to the same class of Yang-Baxter solutions.

\paragraph{Off-Diagonal solution}
There is an anti-diagonal solution to the Yang--Baxter equation
\begin{align}
    \mathcal{R}^{B}(u,v)=\begin{pmatrix}
       \cdot&\cdot&\cdot& b\; g(u,v) \\
        \cdot&\cdot&f(u,v)&\cdot\\
        \cdot&f(u,v)&\cdot&\cdot\\
       a\; g(u,v)&\cdot&\cdot&\cdot
    \end{pmatrix}.
\end{align}
With a similarity transformation you can set $a=b$.

\paragraph{XY type solution}
There is a solution which takes the form of a XY type Hamiltonian
\begin{align}
    \mathcal{R}^{C}(u,v)= f(u,v) \sigma^i \otimes \sigma^i + g(u,v) \sigma^j \otimes \sigma^j,
\end{align}
with $i\neq j$ ,where $i,j \in \{x,y,z\}$ and $\sigma^i$ are the corresponding Pauli matrices. Setting $i=j$ would be equivalent to the diagonal case. There is one further special case in this class
\begin{align}
    \mathcal{R}^{D}(u,v)= f(u,v) \sigma^z \otimes \sigma^z + g(u,v) \sigma^\pm \otimes \sigma^\pm.
\end{align}
The $\pm$ sign flips under transposition and hence this is only one independent solution. The case where we have $i,j = \pm$ also solves the Yang--Baxter equation but it is of lower rank and will be discussed below. These types of solutions have also been found in \cite{Padmanabhan:2024zma}, see \textit{e.g.}  eqn (4.1) and eqn (4.5).

\paragraph{Upper triangular}
We find several models where the $R$-matrix has an upper-triangular structure
\begin{align}
&   \mathcal{R}^E(u,v)=\begin{pmatrix}
        k^2&\cdot&\cdot&\cdot\\
        \cdot&k q(v)&e^{f(u)-f(v)}(k^2-q(v)p(u))&\cdot\\
        \cdot&\cdot&k p(u)&\cdot\\
        \cdot&\cdot&\cdot&-p(u)q(v)
    \end{pmatrix},
    \\
&  \mathcal{R}^{E'}(u,v)=\begin{pmatrix}
        k^2&\cdot&\cdot&\cdot\\
        \cdot&k q&e^{f(u)-f(v)}(k^2-qp)&\cdot\\
        \cdot&\cdot&k p&\cdot\\
        \cdot&\cdot&\cdot&k^2
    \end{pmatrix} .
\\
&   \mathcal{R}^F(u,v)= \begin{pmatrix}
        1&\cdot&\cdot&f(u,v)\\
        \cdot&1&\cdot&\cdot\\
        \cdot&\cdot&1&\cdot\\
        \cdot&\cdot&\cdot&-1
    \end{pmatrix},\\
   &
\mathcal{R}^G(u,v)=\begin{pmatrix}
        f_1(u,v)&f_2(u,v)&f_3(u,v)&f_4(u,v)\\
        \cdot&f_1(u,v)&\cdot&f_3(u,v)\\
        \cdot&\cdot&f_1(u,v)&f_2(u,v)\\
        \cdot&\cdot&\cdot&f_1(u,v)
    \end{pmatrix},\\
& \mathcal{R}^{H}(u,v)=\begin{pmatrix}
        1&-p&p&c\;(f(u)-f(v))\\
        \cdot&1&\cdot& -q  \\
        \cdot&\cdot&1&q\\
        \cdot&\cdot&\cdot&1
    \end{pmatrix} .
    \end{align}

\paragraph{General 8-vertex type}
Then there are finally two additional 8-vertex type models that are of difference form
\begin{align}
    &
    \mathcal{R}^I(u,v)= \begin{pmatrix}
        1&\cdot&\cdot&k(u) e^{g(u)-g(v)}\\
        \cdot&-1&\cdot&\cdot\\
        \cdot&2 e^{g(u)-g(v)}&1&\cdot\\
        \cdot&\cdot&\cdot&1
    \end{pmatrix},\\
    &
    \mathcal{R}^J(u,v)= \begin{pmatrix}
        1&\cdot&\cdot&e^{g(u)-g(v)}\\
        \cdot&-1&e^{g(u)-g(v)}&\cdot\\
        \cdot&e^{g(u)-g(v)}&1&\cdot\\
        -e^{g(u)-g(v)}&\cdot&\cdot&1
    \end{pmatrix}.
      \label{R^J(u,v)}
\end{align}

\subsection{Rank 3}

We find several  solutions of rank 3
\begin{align}
   &  \mathcal{R}^K(u,v)=\begin{pmatrix}
       \cdot&p&  e^{g(u)-g(v)} p&f(u,v)\\
        \cdot&\cdot&e^{g(u)-g(v)}k&e^{g(u)-g(v)} q\\
        \cdot&k&\cdot&q\\
        \cdot&\cdot&\cdot&\cdot
    \end{pmatrix},\\
 &  \mathcal{R}^L(u,v)=\begin{pmatrix}
       1&\cdot&\cdot&\cdot\\
        \cdot&\cdot&\cdot&1\\
        \cdot&e^{f(u)-f(v)}&\cdot&1-e^{f(u)-f(v)}\\
        \cdot&\cdot&\cdot&1
    \end{pmatrix},
\end{align}

\begin{align}
   \mathcal{R}^M(u,v)&=\begin{pmatrix}
       e^{h(u)-h(v)}&\cdot&\cdot&\cdot\\
        \cdot&\cdot&e^{f(u)-f(v)}&\cdot\\
        \cdot&e^{g(u)-g(v)}&\cdot&\cdot\\
        \cdot&\cdot&\cdot&\cdot
    \end{pmatrix},\\
    \mathcal{R}^{M'}(u,v)&=\begin{pmatrix}
       e^{h(u)-h(v)}&\cdot&\cdot&\cdot\\
        \cdot&\cdot&\cdot&\cdot\\
        \cdot&e^{g(u)-g(v)}&\cdot&\cdot\\
        \cdot&\cdot&\cdot&e^{f(u)-f(v)}
    \end{pmatrix},
\\
   \mathcal{R}^N(u,v)&=\begin{pmatrix}
        1&\cdot&\cdot&f(u)h(v)\\
        \cdot&1&\cdot&-f(u)\\
       \cdot&\cdot&1&h(v)\\
        \cdot&\cdot&\cdot&\cdot
    \end{pmatrix},\\
  \mathcal{R}^{N'}(u,v)& =\begin{pmatrix}
        1&\cdot&\cdot&\cdot\\
        \cdot&1&\cdot&\cdot\\
        f(u)&-f(u)h(v)&\cdot&h(v)\\
        \cdot&\cdot&\cdot&1
    \end{pmatrix}.
    \end{align}

\subsection{Rank 2}
We find three solutions of rank 2
\begin{align}
    \mathcal{R}^O(u,v)&=\begin{pmatrix}
\cdot & \cdot &f_1(u,v) & f_2(u,v) \\
\cdot & \cdot & f_3(u,v) & f_4(u,v) \\
 \cdot & \cdot  & \cdot & \cdot \\
 \cdot & \cdot & \cdot & \cdot
    \end{pmatrix},\\
    \mathcal{R}^P(u,v)&=\begin{pmatrix}
       \cdot&f_1(u,v)&\cdot&f_2(u,v)\\
        \cdot&\cdot&\cdot&f_3(u,v)\\
        \cdot&\cdot&\cdot&\cdot\\
        \cdot&\cdot&\cdot&\cdot
    \end{pmatrix},\\
      \mathcal{R}^Q(u,v)&=\begin{pmatrix}
       \cdot&p-k&\frac{k-p}{k-q}(q+k)&f(u,v)\\
        \cdot&\cdot&\cdot&q+k\\
        \cdot&\cdot&\cdot&\frac{k+q}{k+p}(p-k)\\
        \cdot&\cdot&\cdot&\cdot
    \end{pmatrix}.
\end{align}

\subsection{Rank 1}
There are two solutions of rank 1
\begin{align}
    \mathcal{R}^R(u,v)&=\begin{pmatrix}
       \cdot&f_1(u,v)&f_2(u,v)&f_3(u,v)\\
        \cdot&\cdot&\cdot&\cdot\\
        \cdot&\cdot&\cdot&\cdot\\
        \cdot&\cdot&\cdot&\cdot
    \end{pmatrix},\\
    \mathcal{R}^S(u,v)&=\begin{pmatrix}
       \cdot&\cdot&\cdot&\cdot\\
        \cdot&f_1(u,v)&f_2(u,v)&\cdot\\
        \cdot&\cdot&\cdot&\cdot\\
        \cdot&\cdot&\cdot&\cdot
    \end{pmatrix}.
\end{align}
We checked that we reproduce all the $R$-matrices from \cite{Garkun:2024jnp} and we have completed their classification.

\section{R-matrices and Lax operators}
\label{R-matrices and Lax operators}

The relation between solutions of the Yang--Baxter equation and integrable spin chains follows the path of the algebraic Bethe Ansatz. The key idea is to introduce a Lax operator $L$ which will be used to define a transfer matrix that encodes the family of conserved charges that define the integrable model. The $R$-matrix then describes the algebra between the components of the Lax via the so-called $RLL$-relations or fundamental commutation relations \eqref{eq:FCR}. This is a well-known result, see for instance \cite{Faddeev:1987ih,Faddeev:1996iy}.

\begin{prop}\label{propRLL}
Consider a Lax operator $L_{an}(u) :V \otimes V  \rightarrow V \otimes V $ which satisfies the fundamental commutation relations
\begin{align}\label{eq:FCR}
R_{ab}(u,v) L_{an}(u) L_{bn}(v)  = L_{bn}(v)  L_{an}(u) R_{ab}(u,v),
\end{align}
for some invertible matrix $R: V\otimes V \rightarrow V \otimes V$. Let us furthermore assume that the entries of $L(u)$ are analytic functions of $u$. Then $L$ describes a spin chain with local Hilbert space $V$ with a tower of conserved charges.
\end{prop}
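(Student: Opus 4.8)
The plan is to follow the standard quantum inverse scattering construction, taking care to use only the fundamental commutation relations \eqref{eq:FCR} and the invertibility of $R$, and never the Yang--Baxter equation for $R$ itself. First I would build the monodromy matrix on a chain of $N$ sites as an ordered product of Lax operators sharing a single auxiliary space $a$,
\begin{align}
T_a(u) = L_{aN}(u)\, L_{a,N-1}(u) \cdots L_{a1}(u),
\end{align}
and define the transfer matrix as the partial trace over the auxiliary space, $t(u) = \mathrm{tr}_a\, T_a(u)$, an operator on the physical Hilbert space $V^{\otimes N}$.

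The first key step is to lift \eqref{eq:FCR} from a single site to the full monodromy matrix, i.e. to show
\begin{align}
R_{ab}(u,v)\, T_a(u)\, T_b(v) = T_b(v)\, T_a(u)\, R_{ab}(u,v).
\end{align}
I would prove this by the usual ``train'' argument. Since $L_{an}$ and $L_{bm}$ act on disjoint spaces whenever $n \neq m$, they commute, so $T_a(u)T_b(v)$ can be rewritten as an ordered product of site-local blocks $L_{an}(u)L_{bn}(v)$. One then pushes $R_{ab}$ through the chain one block at a time, applying \eqref{eq:FCR} at each site, and reassembles the result into $T_b(v)T_a(u)R_{ab}$. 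This step invokes \eqref{eq:FCR} exactly $N$ times together with factorisation across sites, but makes no use of any cubic relation for $R$.

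The second and conceptually central step is to deduce commutativity of the transfer matrices. Using invertibility of $R$, I would rewrite the monodromy relation as $T_a(u)T_b(v) = R_{ab}^{-1}\, T_b(v)T_a(u)\, R_{ab}$, take the full trace $\mathrm{tr}_{ab}$ over both auxiliary spaces, and invoke cyclicity of the trace to cancel $R_{ab}$ against $R_{ab}^{-1}$. Because $T_a(u)$ acts trivially on $b$ and $T_b(v)$ acts trivially on $a$, the left-hand trace factorises as $t(u)t(v)$ while the right-hand side collapses to $t(v)t(u)$, yielding $[t(u),t(v)]=0$ for all $u,v$. This is the \emph{only} place where invertibility of $R$ enters, and it is precisely here, rather than in any appeal to the Yang--Baxter equation, that the content of the proposition lives; I expect this to be the main point to get right, since it is exactly what survives into the non-regular setting discussed later in \secref{R-matrices and Lax operators}.

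Finally, analyticity of the entries of $L$ makes $t(u)$ analytic, so it admits a power-series expansion $t(u) = \sum_{n \geq 0} Q_n (u-u_0)^n$ about any point $u_0$ in its domain. Differentiating the identity $[t(u),t(v)]=0$ in $u$ and $v$ and setting $u=v=u_0$ gives $[Q_m,Q_n]=0$ for all $m,n$, which is the advertised infinite tower of commuting charges. In the regular case one would instead expand $\log t(u)$ around the point where $L$ degenerates to the permutation in order to extract \emph{local} charges, but for the statement as given the bare expansion suffices. The main obstacle is thus not a hard computation but logical hygiene: verifying that both the monodromy-level relation and the trace identity go through assuming only invertibility, so that the conclusion genuinely does not require $R$ to solve the Yang--Baxter equation.
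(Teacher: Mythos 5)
Your proposal is correct and follows essentially the same route as the paper: construct the transfer matrix $t(u)=\mathrm{tr}_a\big[L_{aN}(u)\cdots L_{a1}(u)\big]$, derive $[t(u),t(v)]=0$ from the fundamental commutation relations, and use analyticity to expand $t(u)$ in a power series whose coefficients then commute. The only difference is that the paper states the commutativity step as "one can show" while you spell out the standard details (the monodromy-level $RTT$ relation via the train argument and the trace-cyclicity cancellation using invertibility of $R$), which is exactly the intended argument.
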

\begin{proof}
Construct the transfer matrix
\begin{align}\label{eq:transfer}
t(u) = \mathrm{tr}_a \big[ L_{aN}(u) \ldots L_{a1}(u)\big].
\end{align}
From the Fundamental Commutation Relations \eqref{eq:FCR}, you can show that
\begin{align}\label{eq:tt}
[t(u),t(v) ] =0
\end{align}
for all values of $u,v$. Since $L$ is analytic, we have the following expansion for the transfer matrix
\begin{align}
t(u) = \sum_{n=0}^\infty \tilde{Q}_n u^n.
\end{align}
Equation \eqref{eq:tt} then implies that $[\tilde{Q}_m,\tilde{Q}_n]=0$.
\end{proof}

In order for such a system to be integrable you also need to argue that they are independent. This will of course depend on the explicit form of the Lax operator, but generically that will be the case. Traditionally one defines the different conserved charges via the logarithmic derivatives of the transfer matrix \eqref{eq:transfer}. So, let us assume that $\log t$ is well-defined and admits a power series in $u$. In particular, let us define
\begin{align}
Q_{n+1} \equiv \frac{d^n}{du^n} \log t(u) \Big|_{u=0},
\end{align}
then from \eqref{eq:tt} it follows that
\begin{align}
[Q_m, Q_n] = 0.
\end{align}
When $L$ is a regular operator, \textit{i.e.} $L(0)=P$, then one can show that the operators $Q_n$ are local operators with interaction range at most $n$. In this case, the Hamiltonian is usually identified with $Q_2$. This is an operator with nearest-neighbour interactions, given by
\begin{align}
\mathcal{H} \equiv Q_2 = \sum_i P L_{i,i+1}^\prime (0).
\end{align}
In this case we see that the Hamiltonian can be identified with the first non-trivial term in the expansion of the Lax operator. For regular solutions there is a clear connection between the Lax operator and regular solutions of the Yang--Baxter equation. 

\begin{prop}\label{lem:RegularRLL}
Let $L$ be a regular Lax operator which satisfies the fundamental commutation relations. Then the corresponding $R$ matrix is regular as well
\begin{align}
R(u,u) = P.
\end{align}
\end{prop}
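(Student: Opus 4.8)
The plan is to evaluate the fundamental commutation relations \eqref{eq:FCR} at coincident spectral parameters and to compare the resulting intertwining relation with the one obeyed trivially by the permutation. Setting $v=u$ in \eqref{eq:FCR} gives $R_{ab}(u,u)L_{an}(u)L_{bn}(u)=L_{bn}(u)L_{an}(u)R_{ab}(u,u)$. On the other hand the permutation always satisfies $P_{ab}L_{an}(u)L_{bn}(u)P_{ab}=L_{bn}(u)L_{an}(u)$, simply because $P_{ab}$ exchanges the auxiliary labels $a$ and $b$. Substituting the second identity into the first and multiplying by $P_{ab}$ on the left, I would conclude that the operator $A(u):=P_{ab}R_{ab}(u,u)$, which acts only on $V_a\otimes V_b$, commutes with $L_{an}(u)L_{bn}(u)$ for every $u$. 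The whole statement then reduces to showing that this forces $A(u)$ to be a multiple of the identity.

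First I would establish this at $u=0$, where regularity enters. Since $L(0)=P$, at $u=0$ we have $L_{an}(0)L_{bn}(0)=P_{an}P_{bn}$, which is the cyclic permutation $C$ of the three factors $V_a,V_b,V_n$. The key elementary observation is that an operator of the form $X\otimes\mathbf{1}_n$, trivial on the quantum space, that commutes with $C$ must be scalar: conjugation by $C$ carries an operator supported on the factors $(a,b)$ to one supported on $(n,a)$, and the only way the two can agree is if $X\propto\mathbf{1}$. Hence $A(0)=c_0\,\mathbf{1}$, i.e. $R(0,0)=c_0\,P$.

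To propagate this to all $u$ I would use analyticity. The map $X\mapsto[\,X\otimes\mathbf{1}_n,\,L_{an}(u)L_{bn}(u)\,]$ is linear in $X$ and depends analytically on $u$, so the dimension of its kernel is upper semicontinuous in $u$. At $u=0$ this kernel is one-dimensional, namely the scalars, so it stays one-dimensional on an open neighbourhood of $0$; since $A(u)$ lies in the kernel and is analytic, $A(u)=c(u)\,\mathbf{1}$ there, and analytic continuation extends this to all $u$. Therefore $R(u,u)=c(u)\,P$, and the freedom $R\mapsto\lambda(u,v)R$ in \eqref{eq:FCR} lets me normalise $c(u)\equiv1$, giving $R(u,u)=P$.

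The main obstacle is the last step: turning the clean $u=0$ computation into a statement valid for all $u$. This rests on the quantum space being genuinely entangled by the Lax operator away from $u=0$, equivalently on the commutant of $L_{an}(u)L_{bn}(u)$ among operators trivial on $V_n$ not growing. I expect this to hold for any regular Lax operator of interest, but it is the point that must be argued with care, via the semicontinuity argument above anchored at the regular point, rather than the formal manipulation that produces the intertwining relation.
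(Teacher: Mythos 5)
Your proof is correct, and while it is anchored in the same two facts as the paper's -- regularity at $u=0$, where $L_{an}(0)L_{bn}(0)=P_{an}P_{bn}$ is a cyclic permutation, and the principle that an operator identity with mismatched index structure forces proportionality to the identity -- the mechanism you use to pass from $u=0$ to all $u$ is genuinely different. The paper asserts $R(0,0)\sim P$ without detail (your three-cycle commutant computation is precisely the justification it omits), then writes $R(u,u)=P\bigl(1+\sum_i u^i \check R^{(i)}\bigr)$, plugs into the RLL relation at $v=u$, and kills the coefficients $\check R^{(i)}$ one by one by induction, using at each order the implication $\check R^{(i)}_{bn}=\check R^{(i)}_{ab}\Rightarrow \check R^{(i)}\sim 1$ and absorbing the resulting scalar into the normalisation. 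You instead view $A(u)=P_{ab}R_{ab}(u,u)$ as an element of the kernel of the analytic family of linear maps $X\mapsto[X_{ab},L_{an}(u)L_{bn}(u)]$, identify that kernel exactly at the regular point, and globalise by rank semicontinuity plus the identity theorem applied to $A(u)-\tfrac{1}{\dim}\mathrm{tr}\bigl(A(u)\bigr)\,\mathbf{1}$. This buys a cleaner, one-shot argument with no order-by-order computation and no iterated renormalisation; the paper's induction buys explicitness about where each normalisation constant goes. Two points you should make explicit: first, the neighbourhood argument needs both inequalities -- upper semicontinuity gives $\dim\ker\le 1$ near $u=0$, while the scalars (which lie in the kernel for every $u$) give $\ge 1$; second, the final rescaling to $c(u)\equiv 1$ requires $c(u)\neq 0$, which follows from the invertibility of $R$ assumed in the fundamental commutation relations (the paper is equally silent on this point).
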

\begin{proof}
By evaluating the RLL relations at $u=v=0$, we find that
\begin{align}
R(0,0) \sim P.
\end{align}
Since $R$ is fixed up to an overall normalisation, we can set $R(0,0)=P$ without loss of generality. Now let us expand the Lax operator  and the $R$-matrix as
\begin{align}\label{eq:Lexpand}
&L(u) = P(1+ u \mathcal{H} + \sum_{i=2}^\infty u^i \check{L}^{(i)} ),
&& R(u,u) = P(1+ \sum^\infty_{i=1} u^i \check{R}^{(i)} ).
\end{align}
We plug this into the RLL relations and set $v=u$ to find
\begin{align}
&P_{ab}(1+ u \check{R}_{ab}^{(1)} + \ldots) P_{an}(1+u \mathcal{H}_{an} +\ldots) P_{bn}(1+u \mathcal{H}_{bn} +\ldots) =\nonumber  \\
 &\qquad \qquad P_{bn}(1+u \mathcal{H}_{bn} +\ldots) P_{an}(1+u \mathcal{H}_{an} +\ldots) P_{ab}(1+ u \check{R}_{ab}^{(1)} + \ldots).
\end{align}
At leading order this is satisfied by construction and at the next order we find
\begin{align}
&P_{ab}  \check{R}_{ab}^{(1)} P_{an} P_{bn} + P_{ab}  P_{an} \mathcal{H}_{an}P_{bn} + P_{ab} P_{an} P_{bn} \mathcal{H}_{bn}
=\nonumber  \\
 &\qquad \qquad 
 P_{bn}P_{an}  P_{ab}  \check{R}_{ab}^{(1) }+  P_{bn}  P_{an} \mathcal{H}_{an}P_{ab} +  P_{bn} \mathcal{H}_{bn}P_{an}P_{ab} ,
\end{align}
from which it follows that $\check{R}_{ab}^{(1)} =\check{R}_{bn}^{(1)}$ which can only be true for $\check{R}_{ab}^{(1)} \sim 1$. Hence this coefficient can be put to 0 by choosing again an appropriate normalisation of $R$. One can then recursively show that all orders in $R$ can be chosen to vanish. More specifically, by cancelling the permutation operators in the RLL relations we find
\begin{align}
&(1+ \sum^\infty_{i=1} u^i \check{R}^{(i)}_{bn} )(1+u \mathcal{H}_{ab} +\ldots)(1+u \mathcal{H}_{bn} +\ldots) 
=\nonumber\\
&\qquad\qquad (1+u \mathcal{H}_{ab} +\ldots) (1+u \mathcal{H}_{bn} +\ldots) (1+ \sum^\infty_{i=1} u^i \check{R}^{(i)}_{ab} ).
\end{align}
This yields an equation for $\check{R}^{(i)}$ in terms of the lower order expansion terms of the form
\begin{align}
\check{R}^{(i)}_{bn} + \sum_{k+l+m=i} \check{R}^{(k)}_{bn}\check{L}^{(l)}_{ab}\check{L}^{(m)}_{bn} = \check{R}^{(i)}_{ab} + \sum_{k+l+m=i} \check{L}^{(l)}_{ab}\check{L}^{(m)}_{bn} \check{R}^{(k)}_{ab},
\end{align}
where the indices in the sum run over positive integers, \textit{i.e.} $l,m>0$. By the induction hypothesis we have that $\check{R}^{(k)}=0$ for $k<i$, from which it follows that $\check{R}^{(i)}_{bn} = \check{R}^{(i)}_{ab}$. This again implies that $\check{R}^{(i)}\sim1$ and can be set to 0 by an appropriate normalisation.
\end{proof}

We can now show that there is a one-to-one correspondence between regular Lax operators of integrable spin chains and regular solutions of the Yang--Baxter equation.

\begin{thm}
\label{thm}
Any $R$-matrix which satisfies the RLL relations for a regular Lax operator is a solution of the quantum Yang-Baxter equation
\begin{align}
R_{12}(u_1,u_2) R_{13}(u_1,u_3) R_{23}(u_2,u_3) = R_{23}(u_2,u_3) R_{13} (u_1,u_3)R_{12}(u_1,u_2),
\end{align}
and satisfies the braiding unitarity relation
\begin{align}\label{eq:Braiding}
R_{21}(v,u) R_{12}(u,v) = 1. 
\end{align}
Conversely, any regular solution $R$ of the quantum Yang--Baxter equation can be interpreted as a regular Lax matrix. More specifically, there is a correspondence between regular solutions of the Yang--Baxter equations and regular Lax operators defined by $L(u) \simeq R(u,0)$.
\end{thm}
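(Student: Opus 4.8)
The plan is to handle the two directions by different means: a direct substitution for the converse, and a train (zig-zag) argument reduced to a commutant statement for the forward implication. For the converse I would start from a regular solution $R$ of the Yang--Baxter equation and simply set $L_{an}(u):=R_{an}(u,0)$. Writing the full Yang--Baxter equation on the spaces $a,b,n$ with the three spectral parameters specialised to $(u,v,0)$ produces exactly $R_{ab}(u,v)R_{an}(u,0)R_{bn}(v,0)=R_{bn}(v,0)R_{an}(u,0)R_{ab}(u,v)$, which is the fundamental commutation relation \eqref{eq:FCR} for this $L$. Regularity, i.e.\ $R(0,0)=P$, gives $L(0)=R(0,0)=P$, so $L$ is a regular Lax operator; hence every regular solution is realised through $L(u)\simeq R(u,0)$.

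For the forward direction I would first invoke Proposition~\ref{lem:RegularRLL} to conclude that the $R$ attached to a regular Lax operator is itself regular, $R(u,u)=P$. To reach braiding unitarity I would combine \eqref{eq:FCR} with the copy obtained under the simultaneous swap $a\leftrightarrow b$, $u\leftrightarrow v$; using invertibility of $R$ to eliminate one ordering of $L_{an}(u)L_{bn}(v)$ shows that $R_{ab}(u,v)R_{ba}(v,u)$ commutes with $L_{bn}(v)L_{an}(u)$ for all $u,v$ and every quantum site. For the Yang--Baxter equation itself I would run the train argument on three auxiliary spaces $1,2,3$ sharing one quantum space $n$: reordering the product $L_{1n}(u_1)L_{2n}(u_2)L_{3n}(u_3)$ into $L_{3n}(u_3)L_{2n}(u_2)L_{1n}(u_1)$ along the two inequivalent routes and equating them shows that the operator $\Phi := R_{12}(u_1,u_2)R_{13}(u_1,u_3)R_{23}(u_2,u_3)\,[\,R_{23}(u_2,u_3)R_{13}(u_1,u_3)R_{12}(u_1,u_2)\,]^{-1}$ commutes with $L_{3n}(u_3)L_{2n}(u_2)L_{1n}(u_1)$ for all parameters. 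Both statements are thereby reduced to a single \emph{commutant} claim: an operator on the auxiliary spaces commuting with all such Lax products is a scalar.

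Granting the commutant claim, $R_{ab}(u,v)R_{ba}(v,u)=c(u,v)\,\mathbf{1}$ and $\Phi=\phi(u_1,u_2,u_3)\,\mathbf{1}$ for scalar functions $c,\phi$. The Yang--Baxter scalar is then pinned to $1$: taking determinants of $R_{12}R_{13}R_{23}=\phi\,R_{23}R_{13}R_{12}$ forces $\phi^{8}=1$, while at coincident parameters every factor becomes $P$ and the two permutation words $P_{12}P_{13}P_{23}$ and $P_{23}P_{13}P_{12}$ agree (both equal the order-reversing element), so $\phi=1$ by continuity. Likewise $c(u,u)=1$ since $P_{ab}P_{ba}=\mathbf{1}$, and the residual normalisation freedom $R\mapsto\lambda(u,v)R$ can be used to enforce $c\equiv1$, yielding \eqref{eq:Braiding}.

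The hard part will be exactly the commutant/irreducibility step, and it is here that genuine integrability — the independence of the charges noted after Proposition~\ref{propRLL} — must enter. For a degenerate Lax operator such as $L(u)\equiv P$ the products $L_{3n}L_{2n}L_{1n}$ collapse to a single permutation word, whose commutant is far larger than the scalars, and the argument breaks down; indeed then $R$ need only commute with a fixed permutation rather than solve the Yang--Baxter equation. I would therefore establish the commutant statement under the genericity already assumed for the tower of charges, expanding $L_{an}(u)=P_{an}(1+u\mathcal{H}_{an}+\dots)$ as in \eqref{eq:Lexpand} and showing that the parameter-derivatives of the Lax products act irreducibly on the auxiliary spaces. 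Finally, since the commutant claim also makes the $R$ attached to a given regular Lax unique up to normalisation, the $R$ reconstructed from $L(u)=R(u,0)$ coincides with the original $R$, which closes the loop and establishes the claimed one-to-one correspondence.
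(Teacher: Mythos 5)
Your setup matches the paper's: the converse direction by specialising the Yang--Baxter equation at the parameters $(u,v,0)$, braiding unitarity from $[R_{21}R_{12},\,L_{1n}(u)L_{2n}(v)]=0$, and the hexagon/train argument producing an operator that commutes with $L_{1n}L_{2n}L_{3n}$; your determinant-plus-continuity argument for pinning the scalar to $1$ is even cleaner than the paper's functional-equation route. But the central step --- that the intertwiners $X=R_{21}R_{12}$ and $\Phi$ are \emph{scalars} --- is exactly what you leave unproven, and the route you propose for it (irreducibility of the Lax products, granted from ``genericity'' of the tower of charges) is a genuine gap. First, no genericity assumption appears in the hypotheses of the theorem: the statement must hold for \emph{every} regular Lax operator, including degenerate ones such as $L(u)\equiv P$, so a proof conditional on irreducibility proves a weaker theorem. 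Second, your diagnosis of that degenerate case is wrong in an instructive way: while the full commutant of the single word $P_{1n}P_{2n}P_{3n}$ is indeed large, the relevant question concerns only operators of the form $X_{123}\otimes 1_n$, trivial on the quantum space, and for those the scalar conclusion \emph{does} hold even when $L\equiv P$ (the fundamental commutation relations there force $R\sim P$, which does satisfy the Yang--Baxter equation). So irreducibility is the wrong mechanism, and the examples you worry about are not actually in tension with the theorem.

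The idea you are missing, which is the heart of the paper's proof, is to use regularity itself instead of any commutant/irreducibility property. Writing $L=P\check{L}$ and cancelling the permutation parts conjugates the unknown operator onto \emph{different spaces}: the relation $[X_{12},L_{1n}(u)L_{2n}(v)]=0$ becomes $X_{2n}\check{L}_{12}\check{L}_{2n}=\check{L}_{12}\check{L}_{2n}X_{12}$, with the same $X$ carrying different index structure on the two sides. Since $\check{L}(0)=1$, at the regular point this reads $X_{2n}=X_{12}$, and an operator acting on spaces $2,n$ that equals an operator acting on spaces $1,2$ must be proportional to the identity --- pure index bookkeeping, valid for any regular $L$. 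An order-by-order Taylor expansion in the spectral parameters then propagates $X\sim 1$ to all parameters, and the same cancellation trick (set $u=0$, strip off $P_{1n}$, observe the left-hand side no longer involves space $1$) plus induction gives $A_{123}\sim 1$ for the six-fold product. The index-mismatch argument also delivers the uniqueness $R(u,0)\sim L(u)$ in the converse direction, which you again defer to the commutant claim. Without replacing your irreducibility step by this regularity argument, your proof does not establish the theorem as stated.
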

\begin{proof}
The last part of the theorem is straightforward to prove. From the Yang--Baxter equation it directly follows that $L(u) \simeq R(u,0)$ satisfies the fundamental commutation relations \eqref{eq:FCR}. Conversely, consider the fundamental commutation relations and set $v=0$. This gives
\begin{align}
R_{an}(u,0) L_{ab}(u)  =   L_{an}(u) R_{ab}(u,0).
\end{align}
Since $L$ is regular, it is invertible and we have that
\begin{align}
 L^{-1}_{an}(u) R_{an}(u,0)  =   R_{ab}(u,0)L^{-1}_{ab}(u) .
\end{align}
Both sides have a different index structure and act on different spaces. The only solution to this equation is that $ L^{-1}_{an}(u) R_{an}(u,0) \sim 1$, which completes the first part of the proof.
\medskip

Let us now prove the first part of the theorem. Let $L(u)$ be an invertible Lax operator that satisfies the RLL relations \eqref{eq:FCR}. From the fundamental commutation relations we find that
\begin{align}\label{eq:braid}
[R_{21}(v,u) R_{12} (u,v), L_{1n} (u)L_{2n}(v) ] = 0.
\end{align}
Let us define the object $R_{21} R_{12}\equiv X_{12}$. From Proposition \ref{lem:RegularRLL} and our assumption that the $R$-matrix is analytic, we see that $X$ admits a power series of the form
\begin{align}
X(u,v) = 1 + \sum_{i=1}^\infty u_-^i X^{(i)}(u_+),
\end{align}
where $u_\pm$ are defined as in Proposition \ref{prop1}. Let us define $\check{L} \equiv PL$. We can then, analogously to the proof of the previous lemma, expand this around the point $u=v$, or equivalently $u_-=0$. At leading order this equation is trivially satisfied. At the next order we find
\begin{align}
X^{(1)}_{2n}(u) \check{L}_{12}(u)\check{L}_{2n}(u)   =  \check{L}_{12}(u)\check{L}_{2n}(u) X^{(1)}_{12} (u).
\end{align}
Using the expansion of the Lax operator \eqref{eq:Lexpand}, we then find
\begin{align}
&X^{(1)}_{2n}(u)(1+u \mathcal{H}_{12} +\ldots) (1+u \mathcal{H}_{2n} +\ldots)  =  (1+u \mathcal{H}_{12} +\ldots) (1+u \mathcal{H}_{2n} +\ldots) X^{(1)}_{12}(u).
\end{align}
Expanding this as a power series in $u$ proves that $X^{(1)} \sim 1$. From this one can easily show by induction that $X\sim1$ and hence that the $R$-matrix satisfies braiding unitarity. Moreover, it is easy to show that the proportionality factor needs to be symmetric under changing $u,v$ and because of this it can be absorbed into the prefactor of the $R$-matrix, making braiding unitarity exact.

\medskip

What remains to show is that $R$ satisfies the quantum Yang--Baxter equation. To show this, we need to consider a triple product of Lax operators
\begin{align}
L_{1n}(u)L_{2n}(v)L_{3n}(w).
\end{align}
Then you can show that 
\begin{align}\label{eq:RRRRRR}
[R_{32}R_{31}R_{21}R_{23}R_{13}R_{12} , L_{1n}L_{2n}L_{3n}]=0,
\end{align}
where we have suppressed the explicit spectral parameter dependence. This is illustrated in Figure \ref{FigYBE}. One can use the $R$-matrices in the above chain to permute the Lax operators. The chain is chosen such that it leaves $L_1L_2L_3$ invariant.
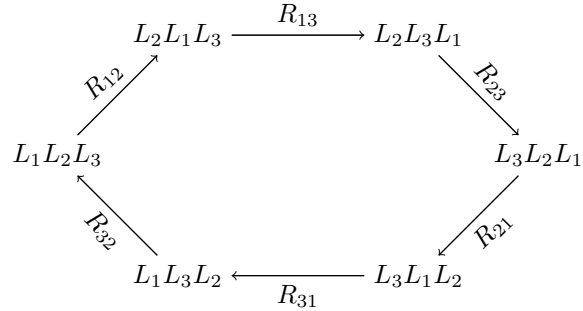
\begin{figure}[h!]
\begin{center}
\begin{tikzpicture}[scale=.8]
\node at (0,0) (nodeA) {$L_1 L_2 L_3$};
\node at (2,2) (nodeB) {$L_2 L_1 L_3$};
\node at (6,2) (nodeC) {$L_2 L_3 L_1$};
\node at (8,0) (nodeD) {$L_3 L_2 L_1$};
\node at (2,-2) (nodeE) {$L_1 L_3 L_2$};
\node at (6,-2) (nodeF) {$L_3 L_1 L_2$};
\draw[->] (nodeA) -- (nodeB);
\draw[->] (nodeB) -- (nodeC);
\draw[->] (nodeC) -- (nodeD);
\draw[->] (nodeE) -- (nodeA);
\draw[->] (nodeF) -- (nodeE);
\draw[->] (nodeD) -- (nodeF);
\draw (nodeA) -- (nodeB) node [midway, above, sloped] (TextNode) {$R_{12}$};
\draw (nodeB) -- (nodeC) node [midway, above, sloped] (TextNode) {$R_{13}$};
\draw (nodeC) -- (nodeD) node [midway, above, sloped] (TextNode) {$R_{23}$};
\draw (nodeD) -- (nodeF) node [midway, below, sloped] (TextNode) {$R_{21}$};
\draw (nodeF) -- (nodeE) node [midway, below, sloped] (TextNode) {$R_{31}$};
\draw (nodeA) -- (nodeE) node [midway, below, sloped] (TextNode) {$R_{32}$};
\end{tikzpicture}
\end{center}
\caption{Acting with different $R$-matrices will permute the Lax operators in the triple product $L_1L_2L_3$ by using the fundamental commutation relations. If we follow the arrows in the diagram, we see that $R_{32}R_{31}R_{21}R_{23}R_{13}R_{12}$ commutes with $L_1L_2L_3$. }\label{FigYBE}
\end{figure}

\noindent For brevity, let us write
\begin{align}
A_{123} \equiv R_{32}R_{31}R_{21}R_{23}R_{13}R_{12}.
\end{align}
We have already shown that for $w=0$ the $R$-matrix reduces to the Lax operator. We assume that the fundamental commutation relations hold, which this means that the YBE is satisfied in this case
\begin{align}
R_{12}(u,v) R_{13}(u,0) R_{23}(v,0) = R_{23}(v,0) R_{13} (u,0)R_{12}(u,v).
\end{align}
Let us make use of this fact by expanding around the case where one of the spectral parameters vanishes. In particular, consider
\begin{align}
A_{123}(u,v,w)L_{1n}(u)L_{2n}(v)L_{3n}(w)=L_{1n}(u)L_{2n}(v)L_{3n}(w)A_{123}(u,v,w).
\end{align}
We set $u=0$ and use regularity to get
\begin{align}
A_{123}(0,v,w)P_{1n}L_{2n}(v)L_{3n}(w)=P_{1n}L_{2n}(v)L_{3n}(w)A_{123}(0,v,w).
\end{align}
Cancelling the permutation operators then gives
\begin{align}
A_{n23}(0,v,w)L_{2n}(v)L_{3n}(w)=L_{2n}(v)L_{3n}(w)A_{123}(0,v,w).
\end{align}
We see that the left hand side does not depend on site 1 anymore. Hence $A_{123}(0,v,w) = \tilde{A}_{23}(v,w)$ and we obtain
\begin{align}
\tilde{A}_{23}(v,w)L_{2n}(v)L_{3n}(w)=L_{2n}(v)L_{3n}(w)\tilde{A}_{23}(v,w).
\end{align}
But now we recognise the same equations as in the case of braiding unitarity \eqref{eq:braid} and we can similarly conclude that $A_{123}(0,v,w) \sim 1$. Let us now expand to the next order in $u$. We get
\begin{align}
&A_{123}(0,v,w)L^\prime_{1n}(0)L_{2n}(v)L_{3n}(w) + A^\prime_{123}(0,v,w)P_{1n}L_{2n}(v)L_{3n}(w) = \\
&\qquad L^\prime_{1n}(0)L_{2n}(v)L_{3n}(w)A_{123}(0,v,w) + P_{1n}L_{2n}(v)L_{3n}(w)A^\prime_{123}(0,v,w).
\end{align}
Using  $A_{123}(0,v,w) \sim 1$ we derive
\begin{align}
 A^\prime_{123}(0,v,w) P_{1n}L_{2n}(v)L_{3n}(w) =  P_{1n}L_{2n}(v)L_{3n}(w)A^\prime_{123}(0,v,w),
\end{align}
which is exactly the same equation that we just solved. Hence, by induction we find that  $A_{123}(u,v,w) = f(u,v,w) 1 $, where $f$ is an analytic function.

Concluding, we showed that
\begin{align}
\label{YBEsim}
R_{32}(w,v)R_{31}(w,u)R_{21}(v,u)R_{23}(v,w)R_{13}(u,w)R_{12}(u,v) =f(u,v,w).
\end{align}
By braiding unitarity we can derive
\begin{align}\label{eq:RRR1}
R_{23}(v,w)R_{13}(u,w)R_{12}(u,v) =f(u,v,w) R_{12}(u,v)R_{13}(u,w)R_{23}(v,w).
\end{align}
This is not quite the Yang Baxter equation. We still need to show $f(u,v,w)=1$. Using Braiding unitarity again \eqref{YBEsim} also leads to the following rewriting
\begin{align}
R_{32}(w,v)R_{31}(w,u)R_{21}(v,u) &=f(u,v,w) R_{21}(v,u)R_{31}(w,u)R_{32}(w,v),\\
R_{32}(w,v)P_{13}R_{13}(w,u)P_{13}R_{21}(v,u)& =f(u,v,w) R_{21}(v,u)P_{13}R_{13}(w,u)P_{13}R_{32}(w,v),\\
R_{12}(w,v)R_{13}(w,u)R_{23}(v,u) &=f(u,v,w) R_{23}(v,u)R_{13}(w,u)R_{12}(w,v).
\end{align}
We can now relabel the parameters $u\rightarrow w$ and $w\rightarrow u$ and obtain
\begin{align}
R_{12}(u,v)R_{13}(u,w)R_{23}(v,w) &=f(w,v,u) R_{23}(v,w)R_{13}(u,w)R_{12}(u,v).
\end{align}
By comparing this against \eqref{eq:RRR1} we conclude that
\begin{align}
f(u,v,w)=\frac{1}{f(w,v,u)}.
\end{align}
Therefore $f$ has to be of the form
\begin{align}
\label{fform}
f(u,v,w)=\pm e^{A(v,w)-A(u,w)}.
\end{align}
By our identification of $R$ with the Lax operator, we know that putting any of the parameters to zero leads to $f(0,v,w)=f(u,0,w)=f(u,v,0)=1$. Together with \eqref{fform}, this leads to $A=const$ and therefore $f=const=1$. Which proves that any matrix $R$, satisfying the fundamental commutation relations for a regular Lax operator, is a regular solution of the Yang Baxter Equation.

\end{proof}

\section{Non-regular Lax operators and the modified YBE}
\label{Non-regular Lax operators and the modified YBE}

For non-regular solutions of the Yang-Baxter equation the correspondence between Lax operators and $R$-matrices is lost. As we saw in Proposition \ref{propRLL}, the defining relation that implies a tower of conserved charges and integrability are the fundamental commutation relations \eqref{eq:FCR}. It is clear that we can use every non-regular solution of the Yang--Baxter equation that we found as such a Lax operator $L(u) \equiv \mathcal{R}(u,0)$. It is clear that the corresponding $R$-matrix from the fundamental commutation relations exists. However, unlike in the regular case, there might be a more general solution. In order to show this, we now consider
\begin{align}
    \tilde{R}_{12}(u,v) L_{13}(u) L_{23}(v)=L_{23}(v)L_{13}(u)  \tilde{R}_{12}(u,v), 
\end{align}
with $L$ a non-regular Lax matrix as defined above. We can then simply solve for $\tilde{R}$ and see what we find. 

For some solutions, there is a more general $\tilde{R}$-matrix than the original $R$-matrix. In particular, we find that all of the $\tilde{R}$ matrices listed below can be chosen to have a regular limit.  Furthermore, unlike in the regular case, $\tilde{R}$ \textit{does not solve the $YBE$}, but solves a modified Yang-Baxter equation instead
\begin{align}
\Tilde{R}_{12}(u,v)\Tilde{R}_{13}(u,w)\Tilde{R}_{23}(v,w) = M_{123}(u,v,w) \Tilde{R}_{23}(v,w) \Tilde{R}_{13}(u,w) R_{12}(u,v),
\end{align}
or, alternatively,
\begin{align}
\Tilde{R}_{12}(u,v)\Tilde{R}_{13}(u,w)\Tilde{R}_{23}(v,w) - \Tilde{R}_{23}(v,w) \Tilde{R}_{13}(u,w) R_{12}(u,v)=\tilde{M}_{123}(u,v,w).
\end{align}
The new matrix $\tilde{R}$-matrix does satisfy equation \eqref{eq:RRRRRR}, but since the Lax operator is not regular anymore there are more solutions to this equation that just the identity operator. This is the crucial point where the regular and non-regular solutions differ and where the proof of the previous section breaks down.

There will be instances when $\tilde{R}$ satisfies the usual Yang--Baxter equation and is not given by the non-regular $R$-matrix that we started out from. In these cases we can interpret that $R$-matrix as a Lax operator of a different spin chain. In particular, there will instances where the $R$-matrix can for instance be seen as a non-regular Lax operator of the $R$-matrix of the XXX spin chain. This is useful in solving these models via the Algebraic Bethe Ansatz since the known fundamental commutations can be used and the only thing that changes is the action of the monodromy matrix on the reference state.

Certain deformations of the YBE are well-known. For example in \cite{gerstenhaber1997boundarysolutionsquantumyangbaxter} a different modified YBE has been proposed. Of course there is also the generalised or twisted YBE \cite{KASHAEV_1993} \cite{tarasov1994solutionstwistedyangbaxterequation} and other proposed models such as the dynamical YBE \cite{Etingof_1998}. So far we have not seen a clear correspondence from any of these equations to ours.

\subsection{Example via Braiding unitarity}

An alternative way to see this is to note that a lot of these $R$-matrices do not satisfy braiding unitarity. Because of this, we find that there are two possible solutions of the fundamental commutation relations, namely $R(u,v)$ and $(PR(v,u)P)^{-1} \equiv \hat{R}$. For solutions that satisfy braiding unitarity, these obviously coincide. If they do not, then any linear combination of these will be a solution to the fundamental commutation relations and while both $R(u,v)$ and $(PR(v,u)P)^{-1}$ satisfy the Yang--Baxter equation separately, their sum will generically not. A perfect example of this is $\mathcal{R}^G$. It is easy to show that this model does not satisfy braiding unitarity and we find
\begin{align}
\hat{\mathcal{R}}^G \sim 
\begin{pmatrix}
 1 & 0 & 0 & -e^{v-u} \\
 0 & 1 & e^{v-u} & 0 \\
 0 & e^{v-u} & -1 & 0 \\
 e^{v-u} & 0 & 0 & 1 \\
\end{pmatrix}.
\end{align}
Now, we can show by direct computation that the most general solution of the fundamental commutation relations when $L(u) = \mathcal{R}^G(u,0)$ is given by
\begin{align}\label{eq:RtildeG}
\tilde{\mathcal{R}}^G = f(u,v) \mathcal{R}^G  + g(u,v) \hat{\mathcal{R}}^G .
\end{align}
It is also easy to check that $\tilde{\mathcal{R}}^G$ does not satisfy the Yang--Baxter equation unless $fg=0$ or $f=g$. Note that in the latter case $\tilde{\mathcal{R}}^G$ is a regular solution of the Yang--Baxter equation and is of the form 8 vertex A from \cite{deLeeuw:2020ahe}. In other words, we find that $\mathcal{R}^G$ can be interpreted as a non-regular Lax operator for the $R$-matrix of a free fermion model.

\subsection{Other example}

Apart from \eqref{eq:RtildeG} we found several more cases of models with a non-trivial $\tilde{\mathcal{R}}$. In some cases we find that $\tilde{\mathcal{R}}$ can be chosen to be regular and satisfying the Yang--Baxter equation. In this case, the non-regular $R$-matrix can be interpreted as a non-regular Lax operator corresponding satisfying the fundamental commutation relations with a known $R$-matrix. 

\paragraph{Model $\mathcal{R}^C$}
\begin{align}
    \tilde{\mathcal{R}}^C(u,v)=\begin{pmatrix}
        g_1(u,v)&\cdot&\cdot&g_3(u,v)\\
        \cdot&g_2(u,v)& \frac{f(u,0)}{f(v,0)}(g_1(u,v)+g_2(u,v))&\cdot\\
        \cdot&\frac{f(v,0)}{f(u,0)}(g_1(u,v)+g_2(u,v))&g_2(u,v)&\cdot\\
        \cdot&\cdot&\cdot&g_1(u,v)
    \end{pmatrix},
\end{align}
for free functions $g_1,g_2$. We see that this $ \tilde{\mathcal{R}}$ reduces to the permutation operator at $u=v$ if we choose $g_1$ and $g_2$ such that
\begin{align}
&g_1(u,u) = 1, && g_2(u,u) = 0, && g_3(u,u) = 0.
\end{align}
This solution $\tilde{\mathcal{R}}$ is regular and satisfies the Yang--Baxter equation exactly when
\begin{align}
&g_1(u,v) = g(u,v),
&&g_2(u,v) = \frac{g(u,v)}{G(u)-G(v)-1},
\end{align}
where $g$ is an free function and $G'=g$. If we set $g_3=0$, then this is simply the $R$-matrix of the twisted XXX spin chain up to a local basis transformation.

\section{Conclusions and outlook}

In this paper we completed the classification of $4\times 4$ solutions of the quantum Yang--Baxter equation by considering non-regular $R$-matrices. We apply a perturbative approach where we expand about the non-regular solution $R(u,u)$ which were classified previously \cite{HIETARINTA}.  We reproduce the results from \cite{Garkun:2024jnp} and extend their results to solutions that are of non-difference form and of lower rank.

We furthermore show that there is a one-to-one relation between regular solutions of the Yang--Baxter equation and regular $R$-matrices. However, for non-regular models this identification is lost. We find for instance that some non-regular $R$-matrices can be interpreted as non-regular Lax operator that satisfy the fundamental commutation relations for a regular $R$-matrix. Generically the $R$-matrices that describe the fundamental commutation relations in these cases satisfy a modified Yang--Baxter equation.

It would be interesting to further understand the modified Yang--Baxter equation and to find if there is a universal form to the modification. So far we have not been able to identify one. Furthermore, extending our results to higher dimensions is another possible avenue of research. There is also the possibility of higher dimensional spectral parameters \cite{Korepin_Bogoliubov_Izergin_1993}, which could be explored. Maybe the approach of \cite{Maity:2024vtr} can used to understand certain classes of non-regular solutions. Finally, it could also be interesting to find the physical properties of the models corresponding to our solutions. For instance the spin chains or the quantum circuits that are associated to them.

\paragraph{Acknowledgements.}

We are grateful to S. Shatashvili and L. Takhtajan for important remarks on the introduction. We would like to thank L. Corcoran, V. Grivtsev and A.L. Retore for useful discussions. MdL was supported in part by SFI and the Royal Society for funding under grants UF160578, RGF$\backslash$ R1$\backslash$ 181011, RGF$\backslash$8EA$\backslash$180167 and RF$\backslash$ ERE$\backslash$ 210373. MdL is also supported by ERC-2022-CoG - FAIM 101088193. VP was supported by ERC-2022-CoG - FAIM 101088193.

\appendix

\section{Constant solutions}
\label{constantsolutions}
The starting point for our analysis are the constant solutions of the Yang--Baxter equation. They were classified in \cite{HIETARINTA} and we list them here. We omit the permutation operator, which gives rise to regular solutions who were classified already in \cite{Corcoran:2023zax}.  We group the matrices according to their rank. 

\subsection{Rank 4} 

Here we list the invertible $R$-matrices. These were considered recently in \cite{Maity:2024vtr} as well.

\begin{align}
   R^{A}& = \begin{pmatrix}
        k&\cdot&\cdot&\cdot\\
        \cdot&q&\cdot&\cdot\\
        \cdot&\cdot&p&\cdot\\
        \cdot&\cdot&\cdot&s
    \end{pmatrix},
&&   R^B= \begin{pmatrix}
        k&q&p&s\\
        \cdot&k&\cdot&p\\
        \cdot&\cdot&k&q\\
        \cdot&\cdot&\cdot&k
    \end{pmatrix} ,\\
        R^C&=\begin{pmatrix}
        k^2&\cdot&\cdot&\cdot\\
        \cdot&q k&\cdot&\cdot\\
        \cdot&k^2-qp&p k&\cdot\\
        \cdot&\cdot&\cdot&k^2
    \end{pmatrix},
    &&   R^{C'}=\begin{pmatrix}
        k^2&\cdot&\cdot&\cdot\\
        \cdot&q k&\cdot&\cdot\\
        \cdot&k^2-qp&p k&\cdot\\
        \cdot&\cdot&\cdot&-pq
    \end{pmatrix}, \\
   R^D&=\begin{pmatrix}
        \cdot&\cdot&\cdot&q\\
        \cdot&\cdot&k&\cdot\\
        \cdot&k&\cdot&\cdot\\
        p&\cdot&\cdot&\cdot
    \end{pmatrix},
&&   R^E = \begin{pmatrix}
        k^2&-kp&kp&pq\\
        \cdot&k^2&\cdot&-kq\\
        \cdot&\cdot&k^2&kq\\
        \cdot&\cdot&\cdot&k^2
    \end{pmatrix},
\\
   R^F&= \begin{pmatrix}
        p&\cdot&\cdot&k\\
        \cdot&q&\cdot&\cdot\\
        \cdot&p-q&p&\cdot\\
        \cdot&\cdot&\cdot&-q
    \end{pmatrix},
&&
  R^G=\begin{pmatrix}
        1+2q-q^2&\cdot&\cdot&1-q^2\\
        \cdot&1+q^2&1-q^2&\cdot\\
        \cdot&1-q^2&1+q^2&\cdot\\
        1-q^2&\cdot &\cdot&1-2q-q^2
    \end{pmatrix},
\\
   R^H&= \begin{pmatrix}
        1&\cdot&\cdot&1\\
        \cdot&-1&1&\cdot\\
        \cdot&1&1&\cdot\\
        -1&\cdot&\cdot&1
    \end{pmatrix},
&&
   R^I= \begin{pmatrix}
        1&\cdot&\cdot&1\\
        \cdot&-1&\cdot&\cdot\\
        \cdot&\cdot&-1&\cdot\\
        \cdot&\cdot&\cdot&1
    \end{pmatrix}.
\end{align}

\subsection{Rank 3} 
\begin{align}
   R^J&=\begin{pmatrix}
        p+q&\cdot&\cdot&\cdot\\
        \cdot&q&\cdot&q\\
        \cdot&\cdot &p+q&\cdot\\
        \cdot&p &\cdot&p
    \end{pmatrix},
&&   R^K = \begin{pmatrix}
        \cdot&p&p&\cdot \\
        \cdot&\cdot&k&q\\
        \cdot&k&\cdot&q\\
        \cdot&\cdot&\cdot&\cdot
    \end{pmatrix},
\\    
   R^L&= \begin{pmatrix}
        1&\cdot&\cdot&\cdot\\
        \cdot&\cdot&\cdot&1\\
        \cdot&1&\cdot&\cdot\\
        \cdot&\cdot&\cdot&1
    \end{pmatrix},
&&     R^M= \begin{pmatrix}
        1&\cdot&\cdot&\cdot\\
        \cdot&\cdot &1&\cdot\\
        \cdot&1&\cdot &\cdot\\
        \cdot&\cdot&\cdot&\cdot
    \end{pmatrix}.
\end{align}

\subsection{Rank 2} 

\begin{align}
   R^N & =\begin{pmatrix}
        \cdot&(q-k)(p^2-k^2)&(q+k)(p^2-k^2)&s\\
        \cdot&\cdot&\cdot&(q^2-k^2)(p+k)\\
        \cdot&\cdot &\cdot&(q^2-k^2)(p-k)\\
        \cdot&\cdot &\cdot&\cdot
    \end{pmatrix},
&& R^O= \begin{pmatrix}
        1&1&1&\cdot \\
        \cdot&\cdot&\cdot&\cdot\\
        \cdot&\cdot&\cdot&\cdot\\
        \cdot&\cdot&\cdot&1
        \end{pmatrix}
\\
   R^P&=\begin{pmatrix}
        \cdot&p&\cdot&q\\
        \cdot&\cdot&\cdot&\cdot\\
        \cdot&k &\cdot&\cdot\\
        \cdot&\cdot &\cdot&\cdot
    \end{pmatrix}
&&
   R^Q= \begin{pmatrix}
        \cdot&p&\cdot&\cdot \\
        \cdot&\cdot&\cdot&q\\
        \cdot&\cdot&\cdot&\cdot\\
        \cdot&\cdot&\cdot&\cdot
    \end{pmatrix}
\\
   R^R&=\begin{pmatrix}
        \cdot&p&\cdot&\cdot\\
        \cdot&\cdot&\cdot&\cdot\\
        \cdot&\cdot &\cdot&q\\
        \cdot&\cdot &\cdot&\cdot
    \end{pmatrix}.
\end{align}

\subsection{Rank 1} 

\begin{align}
  R^S= \begin{pmatrix}
        \cdot&p&q&\cdot \\
        \cdot&\cdot&\cdot&\cdot\\
        \cdot&\cdot&\cdot&\cdot\\
        \cdot&\cdot&\cdot&\cdot
    \end{pmatrix}
    \quad
     R^T= \begin{pmatrix}
        \cdot&\cdot&\cdot&\cdot \\
        \cdot&p&q&\cdot\\
        \cdot&\cdot&\cdot&\cdot\\
        \cdot&\cdot&\cdot&\cdot
    \end{pmatrix} 
\end{align}

\section{ From constant $R$ to $\mathcal{R}(u,v)$}\label{completeR}
In this appendix one finds a detailed list of the non-constant models \ref{Non-constant Models} found from the constant ones \ref{constantsolutions}. Every section has a diagrammatic summary of these relations.
\subsection{$RA$}
Let's start with the diagonal model.
\label{sec:RA}
\begin{equation}
\label{RAconst}
   RA= \begin{pmatrix}
        k&\cdot&\cdot&\cdot\\
        \cdot&q&\cdot&\cdot\\
        \cdot&\cdot&p&\cdot\\
        \cdot&\cdot&\cdot&s
    \end{pmatrix}.
\end{equation}
\paragraph{\textcolor{BrickRed}{Non-constant Models}}
The non-constant models that can be found from \eqref{RAconst} are the following:
\begin{center}
\begin{tikzpicture}
 \draw[BrickRed!90, thick, ->] (0,0) -- (4,0) ;
  \draw[BrickRed!90, thick, ->] (0,0) -- (4,-1) ;
   \draw[BrickRed!90, thick, ->] (0,0) -- (4,1) ;
    \draw[BrickRed!90, thick, ->] (0,0) -- (4,-2) ;
     \draw[BrickRed!90, thick, ->] (0,0) -- (4,2) ;
   \filldraw[color=BrickRed!90, fill=BrickRed!10,  thick ](0,0) circle (0.5);
   \draw(4.5,0) circle (0.5);
    \draw (0,0) node {${RA}$};
    \draw (4.5,0) node {$\mathcal{R}^D$} ;  
   \draw(4.5,-1) circle (0.5);
    \draw (4.5,-1) node {$\mathcal{R}^F$} ;
    \draw(4.5,1) circle (0.5);
    \draw (4.5,1) node {$\mathcal{R}^C$} ;
    \draw(4.5,-2) circle (0.5);
    \draw (4.5,-2) node {$\mathcal{R}^S$} ;
    \draw(4.5,2) circle (0.5);
    \draw (4.5,2) node {$\mathcal{R}^A$} ;
\end{tikzpicture}
\end{center}
\paragraph{$\mathcal{R}^A$}
\begin{equation}
    \label{RA}
\mathcal{R}^A(u,v)=\begin{pmatrix}
        f_1(u,v)&\cdot&\cdot&\cdot\\
        \cdot&f_2(u,v)&\cdot&\cdot\\
        \cdot&\cdot&f_3(u,v)&\cdot\\
        \cdot&\cdot&\cdot&f_4(u,v)
    \end{pmatrix}.
\end{equation}
The diagonal non constant model can be found for general coefficients $k,q,p,s$.

\paragraph{$\mathcal{R}^C$}
\begin{align}
    \mathcal{R}^{C,i,j}(u,v)= f(u,v) \sigma^i \otimes \sigma^i + g(u,v) \sigma^j \otimes \sigma^j,
\end{align}

For $p=q=-1$, $k=s=1$ one finds $\mathcal{R}^C$.

\paragraph{$\mathcal{R}^D$}
\begin{align}
\label{RD}
    \mathcal{R}^D:=f(u,v) \sigma_z \otimes \sigma_z +g(u,v) \sigma_{\pm} \otimes \sigma_{\pm}.
\end{align}
Found at $RA(k=s=1,p=q=-1)$. 
\paragraph{$\mathcal{R}^F$}
\begin{equation}
\label{RF}
   \mathcal{R}^F:= \begin{pmatrix}
        1&\cdot&\cdot&f(u,v)\\
        \cdot&1&\cdot&\cdot\\
        \cdot&\cdot&1&\cdot\\
        \cdot&\cdot&\cdot&-1
    \end{pmatrix} .
\end{equation}
Found at $RA(k=p=q=1,s=-1)$.
\paragraph{$\mathcal{R}^S$}
\begin{equation}
\label{RS}
   \mathcal{R}^S:= \begin{pmatrix}
        \cdot&\cdot&\cdot&\cdot\\
        \cdot&f(u,v)&g(u,v)&\cdot\\
        \cdot&\cdot&\cdot&\cdot\\
        \cdot&\cdot&\cdot&\cdot
    \end{pmatrix} .
\end{equation}
Found at $RA(k=s=q=0,p=1)$.

\subsection{$RB$}
\label{sec:RB}
\begin{equation}
   RB= \begin{pmatrix}
        k&q&p&s\\
        \cdot&k&\cdot&p\\
        \cdot&\cdot&k&q\\
        \cdot&\cdot&\cdot&k
    \end{pmatrix}.
\end{equation}
\paragraph{Constant Model Intersections}
For $q=p=s=0$ we find a special case of $RA$ \ref{sec:RA}, this setting is not discussed again.
\paragraph{\textcolor{red}{Non-constant Models}}
\begin{center}
\begin{tikzpicture}
 \draw[red!100, thick, ->] (0,0) -- (4,0) ;
  \draw[red!100, thick, ->] (0,0) -- (4,-1) ;
  \draw[red!100, thick, ->] (0,0) -- (4,1) ;
  \draw[red!100, thick, ->] (0,0) -- (4,-2) ;
   \draw[red!100, thick, ->] (0,0) -- (4,2) ;
   \draw[red!100, thick, ->] (0,0) -- (4,3) ;
   \filldraw[color=red!100, fill=red!20,  thick ](0,0) circle (0.5);
  \draw (0,0) node {${RB}$};
     \draw(4.5,0) circle (0.5);
    \draw (4.5,0) node {$\mathcal{R}^G$} ;  
      \draw(4.5,-2) circle (0.5);
    \draw (4.5,-2) node {$\mathcal{R}^R$} ;    
      \draw(4.5,-1) circle (0.5);
    \draw (4.5,-1) node {$\mathcal{R}^O$} ;  
      \draw(4.5,2) circle (0.5);
    \draw (4.5,2) node {$\mathcal{R}^{D}$} ;  

  \draw(4.5,1) circle (0.5);
    \draw (4.5,1) node {$\mathcal{R}^F$} ; 
     \draw(4.5,3) circle (0.5);
    \draw (4.5,3) node {$\mathcal{R}^B$} ; 
\end{tikzpicture}
\end{center}
\paragraph{$\mathcal{R}^B$}
\begin{equation}
\label{RB}
    \mathcal{R}^B(u,v):=\begin{pmatrix}
        \cdot&\cdot&\cdot& a f_1(u,v)\\
        \cdot&\cdot& f_2(u,v)&\cdot\\
        \cdot& f_2(u,v)&\cdot&\cdot\\
        b f_1(u,v)&\cdot&\cdot&\cdot
    \end{pmatrix}.
\end{equation}
One finds \eqref{RB} for $k=p=q=0$.
\paragraph{$\mathcal{R}^{F}$}
One finds a model similar to \eqref{RF} for $k=p=q=0$.

\paragraph{$\mathcal{R}^G$}
\begin{equation}
    \label{RG}\mathcal{R}^G(u,v)=\begin{pmatrix}
        f_1(u,v)&f_2(u,v)&f_3(u,v)&f_4(u,v)\\
        \cdot&f_1(u,v)&\cdot&f_3(u,v)\\
        \cdot&\cdot&f_1(u,v)&f_2(u,v)\\
        \cdot&\cdot&\cdot&f_1(u,v)
    \end{pmatrix}.
\end{equation}
Found for all equal coefficients.
\paragraph{$\mathcal{R}^D$}
One finds \eqref{RD} for $k=s=0$ and $p,q$ generic.
\paragraph{$\mathcal{R}^O$}
\begin{equation}
\label{RO}
   \mathcal{R}^O:= \begin{pmatrix}
        \cdot&\cdot&f_1(u,v)&f_2(u,v)\\
        \cdot&\cdot&f_3(u,v)&f_4(u,v)\\
        \cdot&\cdot&\cdot&\cdot\\
        \cdot&\cdot&\cdot&\cdot
    \end{pmatrix} .
    \end{equation}
Found for $(k=s=q=0)$.

\paragraph{$\mathcal{R}^R$}
\begin{equation}
\label{RR}
   \mathcal{R}^{R}:= \begin{pmatrix}
        \cdot&\cdot&\cdot&f(u,v)\\
        \cdot&\cdot&\cdot&g(u,v)\\
        \cdot&\cdot&\cdot&h(u,v)\\
        \cdot&\cdot&\cdot&\cdot
    \end{pmatrix}.
\end{equation}
Found for $k=p=q=0$.

\subsection{$RC$}
\label{sec:RC}
\begin{equation}
\label{RC}
    RC=\begin{pmatrix}
        k^2&\cdot&\cdot&\cdot\\
        \cdot&k q&\cdot&\cdot\\
        \cdot&k^2-qp&k p&\cdot\\
        \cdot&\cdot&\cdot&k^2
    \end{pmatrix}. 
\end{equation}
\paragraph{Constant Model Intersections}
For $k^2=pq$ we find a special cases of $RA$ (\ref{sec:RA}), these settings are not discussed again.
\paragraph{\textcolor{RedOrange}{Non-constant Models}}
\begin{center}
\begin{tikzpicture}
   \draw[RedOrange!60, thick,  ->] (0,0) -- (4,1) ;
      \draw[RedOrange!60, thick,  ->] (0,0) -- (4,0) ;
       \draw[RedOrange!60, thick,  ->] (0,0) -- (4,-1) ;
       \draw[RedOrange!60, thick,  ->] (0,0) -- (4,-2) ;
       \draw[RedOrange!60, thick,  ->] (0,0) -- (4,2) ;
   \filldraw[color=RedOrange!70, fill=RedOrange!10,  thick ](0,0) circle (0.5);
     \draw (0,0) node {${RC}$};  
    \draw(4.5,1) circle (0.5);
    \draw (4.5,1) node {$\mathcal{R}^O$} ; 
         \draw(4.5,0) circle (0.5);
    \draw (4.5,0) node {$\mathcal{R}^{M'}$} ; 
      \draw(4.5,-1) circle (0.5);
    \draw (4.5,-1) node {$\mathcal{R}^{S}$} ;
     \draw(4.5,-2) circle (0.5);
    \draw (4.5,-2) node {$\mathcal{R}^{N'}$} ;
  \draw(4.5,2) circle (0.5);
    \draw (4.5,2) node {$\mathcal{R}^{E}$} ;
\end{tikzpicture} 
\end{center}

\paragraph{$\mathcal{R}^E$}
\begin{align}
\label{RE}
    \mathcal{R}^E:=\begin{pmatrix}
        k^2&\cdot&\cdot&\cdot\\
        \cdot&k q&\cdot&\cdot\\
        \cdot&e^{f(u)-f(v)}(k^2-pq)&k p&\cdot\\
        \cdot&\cdot&\cdot&k^2
    \end{pmatrix}.
\end{align}
Found for generic coefficients.
\paragraph{$\mathcal{R}^{N'}$}

\begin{align}
\label{RN'}
    \mathcal{R}^{N'}:=\begin{pmatrix}
        1&\cdot&\cdot&\cdot\\
        \cdot&1&\cdot&\cdot\\
        f(u)&-f(u)h(v)&\cdot&h(v)\\
        \cdot&\cdot&\cdot&1
    \end{pmatrix}.
    \end{align}
Found for $p=0$, $q=k=1$. (Similarly for $q=0$, $p=k=1$.)

\paragraph{$\mathcal{R}^{M'}$}
\begin{align}
\label{RQ}
    \mathcal{R}^{M'}:=\begin{pmatrix}
        e^{f(u)-f(v)}&\cdot&\cdot&\cdot\\
        \cdot&\cdot&\cdot&\cdot\\
        \cdot&e^{g(u)-g(v)}&\cdot&\cdot\\
        \cdot&\cdot&\cdot&e^{h(u)-h(v)}
    \end{pmatrix}.
    \end{align}
Found for $p=q=0$.

\paragraph{ $\mathcal{R}^{O}$, $\mathcal{R}^{S}$}
One finds \eqref{RO} and \eqref{RS} for $k=0$.

\subsection{$RC'$}
\label{sec:RC'}
\begin{equation}
     RC'=\begin{pmatrix}
        k^2&\cdot&\cdot&\cdot\\
        \cdot&q k &\cdot&\cdot\\
        \cdot&k^2-qp&pk&\cdot\\
        \cdot&\cdot&\cdot&-p q
    \end{pmatrix}.
\end{equation}

\paragraph{Constant Model Intersections}
For $k^2=pq$ we find a special cases of $RA$ (\ref{sec:RA}), these settings are not discussed again.
\paragraph{\textcolor{YellowOrange}{Non-constant Models}}
\begin{center}
\begin{tikzpicture}
 \draw[YellowOrange!90, very thick, ->] (0,0) -- (4,0) ;
    \filldraw[color=YellowOrange!100, fill=YellowOrange!25,  thick ](0,0) circle (0.5);
   \draw (0,0) node {${RC'}$};
   \draw(4.5,0) circle (0.5);
    \draw (4.5,0) node {$\mathcal{R}^{E'}$} ;  
\end{tikzpicture}
\end{center}

\paragraph{$\mathcal{R}^{E'}$}
\begin{align}
\label{RE'}
    \mathcal{R}^{E'}:=\begin{pmatrix}
        k^2&\cdot&\cdot&\cdot\\
        \cdot&k q(u)&\cdot&\cdot\\
        \cdot&e^{f(u)-f(v)}(k^2-p(u)q(u))&k p(v)&\cdot\\
        \cdot&\cdot&\cdot&-p(v)q(u)
    \end{pmatrix}.
\end{align}

\paragraph{$RD$}
\label{sec:RD}
\begin{equation}
    RD=\begin{pmatrix}
        \cdot&\cdot&\cdot&q\\
        \cdot&\cdot&k&\cdot\\
        \cdot&k&\cdot&\cdot\\
        p&\cdot&\cdot&\cdot
    \end{pmatrix}.
\end{equation}

\paragraph{\textcolor{Dandelion}{Non-constant models}}
\begin{center}
\begin{tikzpicture}
 \draw[Dandelion!90,very thick, ->] (0,0) -- (4,0) ;
  \draw[Dandelion!90, very thick, ->] (0,0) -- (4,-1) ;
    \draw[Dandelion!90, very thick, ->] (0,0) -- (4,1) ;
   \filldraw[color=Dandelion!100, fill=Dandelion!30,  thick ](0,0) circle (0.5);
   \draw(4.5,0) circle (0.5);
    \draw (0,0) node {${RD}$};
    \draw (4.5,0) node {$\mathcal{R}^B$} ;  
   \draw(4.5,-1) circle (0.5);
    \draw (4.5,-1) node {$\mathcal{R}^C$} ;
     \draw(4.5,1) circle (0.5);
    \draw (4.5,1) node {$\mathcal{R}^A$} ;

\end{tikzpicture} 
\end{center}

\paragraph{$\mathcal{R}^A$, $\mathcal{R}^C$ }
Models similar to \eqref{RA} or \eqref{RC} are found for $q=1/p$.
\paragraph{$\mathcal{R}^B$}
One finds \eqref{RB} for generic coefficients.

\subsection{$RE$}
\label{sec:RE}
\begin{equation}
   RE= \begin{pmatrix}
        k&-p&p&pq\\
        \cdot&k&\cdot&-q\\
        \cdot&\cdot&k&q\\
        \cdot&\cdot&\cdot&k
    \end{pmatrix}.
\end{equation}

\paragraph{Intersection of constant models}
For $p=q=0$ we find $RA$ (\ref{sec:RA}), this model is not discussed again.

\paragraph{\textcolor{Goldenrod}{Non-constant Models}}
\begin{center}
\begin{tikzpicture}
 \draw[Goldenrod!100,very thick, ->] (0,0) -- (4,0) ;
  \draw[Goldenrod!100, very thick, ->] (0,0) -- (4,-1) ;
  \draw[Goldenrod!100,very thick, ->] (0,0) -- (4,-2) ;
  \draw[Goldenrod!100,very thick, ->] (0,0) -- (4,2) ;
 \draw[Goldenrod!100,very thick, ->] (0,0) -- (4,1) ;
   \filldraw[color=Goldenrod!100, fill=Goldenrod!20,  thick ](0,0) circle (0.5);
   \draw(4.5,0) circle (0.5);
    \draw (0,0) node {${RE}$};
    \draw (4.5,0) node {$\mathcal{R}^K$} ;  
   \draw(4.5,-1) circle (0.5);
    \draw (4.5,-1) node {$\mathcal{R}^R$} ;
 \draw(4.5,1) circle (0.5);
    \draw (4.5,1) node {$\mathcal{R}^{H}$} ;
         \draw(4.5,-2) circle (0.5);
    \draw (4.5,-2) node {$\mathcal{R}^{N}$} ;
     \draw(4.5,2) circle (0.5);
    \draw (4.5,2) node {$\mathcal{R}^{G}$} ;
\end{tikzpicture}
\end{center}

\paragraph{$\mathcal{R}^H$}
\begin{equation}
\label{RH}
   \mathcal{R}^H:=
    \begin{pmatrix}
        1&-p&p& p q + f(u)-f(v) \\
        \cdot &1 &\cdot & -q \\
          \cdot &\cdot &1 & q \\
            \cdot &\cdot &\cdot & 1
    \end{pmatrix}.
\end{equation}
Found for $k=1$.

\paragraph{$\mathcal{R}^G$}
One finds \eqref{RG} for $q=-p$ or for $k=0$.

\paragraph{$  \mathcal{R}^{N}$}
\label{RN}
\begin{equation}
   \mathcal{R}^{N}:= \begin{pmatrix}
        1&\cdot&\cdot&f(v)h(u)\\
        \cdot&1&\cdot&-h(u)\\
        \cdot&\cdot&1&f(v)\\
        \cdot&\cdot&\cdot&\cdot
    \end{pmatrix}.
\end{equation}
Found for $k=p=0$.

\paragraph{$\mathcal{R}^{R}$}
One finds \eqref{RR} for $q=k=0$.

\paragraph{$\mathcal{R}^{K}$}
One finds \eqref{RK} for $k=0$ .

\subsection{$RF$}
\begin{equation}
   RF= \begin{pmatrix}
        p&\cdot&\cdot&k\\
        \cdot&q&\cdot&\cdot\\
        \cdot&p-q&p&\cdot\\
        \cdot&\cdot&\cdot&-q
    \end{pmatrix}.
\end{equation}

\paragraph{Intersections of constant models}
For ($p=q$ and $k=0$) we find a special case of $RA$ (\ref{sec:RA}). For ($k=0$) one finds special cases of $RC'$ (\ref{sec:RC'}). For $p=q=0$ we find the special case $RB(k=p=q=0)$ (\ref{sec:RB}). These cases are not discussed again.

\paragraph{\textcolor{GreenYellow}{Non-constant models}}
\begin{center}
    \begin{tikzpicture}
 \draw[GreenYellow!100, very thick, ->] (0,0) -- (4,0) ;
  \draw[GreenYellow!100, very thick, ->] (0,0) -- (4,1) ;
   \filldraw[color=GreenYellow!100, fill=GreenYellow!30,  thick ](0,0) circle (0.5);
    \draw (0,0) node {${RF}$};
     \draw(4.5,0) circle (0.5);
     \draw (4.5,0) node {$\mathcal{R}^F$} ;
     \draw(4.5,1) circle (0.5);
     \draw (4.5,1) node {$\mathcal{R}^{I}$} ;   
\end{tikzpicture}
\end{center}

\paragraph{$\mathcal{R}^I$}
\begin{equation}
\label{RI}
   \mathcal{R}^I:= \begin{pmatrix}
        1&\cdot&\cdot&k(u) e^{u-v}\\
        \cdot&-1&\cdot&\cdot\\
        \cdot&2 e^{u-v}&1&\cdot\\
        \cdot&\cdot&\cdot&1
    \end{pmatrix}.
\end{equation}
Found for $q=-p=-1$.

\paragraph{$\mathcal{R}^F$}
One finds \eqref{RF} for $p=q=1$.

\subsection{$RG$}
\label{sec:RG}
\begin{equation}
   RG=\begin{pmatrix}
        p^2+2pq-q^2&\cdot&\cdot&p^2-q^2\\
        \cdot&p^2+q^2&p^2-q^2&\cdot\\
        \cdot&p^2-q^2&p^2+q^2&\cdot\\
        p^2-q^2&\cdot &\cdot&p^2-2q p-q^2
    \end{pmatrix}.
\end{equation}

\paragraph{Intersections of constant models}
Putting $p=q$ leads to $RA$ (\ref{sec:RA}). This setting will not be discussed again.

\paragraph{\textcolor{SpringGreen}{Non-constant models}}
\begin{center}
    \begin{tikzpicture}
 \draw[SpringGreen!100, very thick, ->] (0,0) -- (4,0) ;
   \draw[SpringGreen!100,very  thick, ->] (0,0) -- (4,1) ;
   \filldraw[color=SpringGreen!100, fill=SpringGreen!40,  thick ](0,0) circle (0.5);
   \draw(4.5,0) circle (0.5);
    \draw (0,0) node {${RG}$};
    \draw (4.5,0) node {$\mathcal{R}^B$} ;  
     \draw(4.5,1) circle (0.5);
    \draw (4.5,1) node {$\mathcal{R}^{A}$} ;
\end{tikzpicture}
\end{center}

\paragraph{$ \mathcal{R}^{B}$}
For $p=0$ one finds a model similar to \eqref{RB}.
\paragraph{$\mathcal{R}^A$}
For $q=0$ we find models similar to \eqref{RA}.

\subsection{$RH$}
\begin{equation}
   RH= \begin{pmatrix}
        1&\cdot&\cdot&1\\
        \cdot&-1&1&\cdot\\
        \cdot&1&1&\cdot\\
        -1&\cdot&\cdot&1
    \end{pmatrix}.
\end{equation}

\paragraph{\textcolor{LimeGreen}{Non-constant Model}}
\begin{center}
\begin{tikzpicture}
 \draw[LimeGreen!100, very thick, ->] (0,0) -- (4,0) ;
   \filldraw[color=LimeGreen!100, fill=LimeGreen!30,  thick ](0,0) circle (0.5);
   \draw(4.5,0) circle (0.5);
    \draw (0,0) node {${RH}$};
    \draw (4.5,0) node {$\mathcal{R}^J$} ;  
\end{tikzpicture}
\end{center}

\paragraph{$\mathcal{R}^J$}
\begin{equation}
\label{RJ}
  \mathcal{R}^J:= \begin{pmatrix}
        1&\cdot&\cdot&e^{u-v}\\
        \cdot&-1&e^{u-v}&\cdot\\
        \cdot&e^{u-v}&1&\cdot\\
        -e^{u-v}&\cdot&\cdot&1
    \end{pmatrix}.
\end{equation}

\subsection{$RI$}
\begin{equation}
   RI= \begin{pmatrix}
        1&\cdot&\cdot&1\\
        \cdot&-1&\cdot&\cdot\\
        \cdot&\cdot&-1&\cdot\\
        \cdot&\cdot&\cdot&1
    \end{pmatrix}.
\end{equation}

\paragraph{\textcolor{OliveGreen}{Non-constant Models}}
\begin{center}
\begin{tikzpicture}
 \draw[OliveGreen!100, thick, ->] (0,0) -- (4,0) ;
   \filldraw[color=OliveGreen!100, fill=OliveGreen!20,  thick ](0,0) circle (0.5);
   \draw(4.5,0) circle (0.5);
    \draw (0,0) node {${RI}$};
    \draw (4.5,0) node {$\mathcal{R}^D$} ;  
\end{tikzpicture}
\end{center}

\subsection{$RJ$}
\label{sec:RJ}
\begin{equation}
   RJ=\begin{pmatrix}
        p+q&\cdot&\cdot&\cdot\\
        \cdot&q&\cdot&q\\
        \cdot&\cdot &p+q&\cdot\\
        \cdot&p &\cdot&p
    \end{pmatrix}.
    \end{equation}
    
\paragraph{\textcolor{ForestGreen}{Non-constant Models}}
\begin{center}
    \begin{tikzpicture}
 \draw[ForestGreen!100, thick, ->] (0,0) -- (4,0) ;
   \filldraw[color=ForestGreen!100, fill=ForestGreen!20,  thick ](0,0) circle (0.5);
   \draw(4.5,0) circle (0.5);
    \draw (0,0) node {${RJ}$};
    \draw (4.5,0) node {$\mathcal{R}^{O}$} ;  
\end{tikzpicture}
\end{center}

\paragraph{$\mathcal{R}^{O}$}
One finds \eqref{RO} for $p=-q$.

\subsection{$RK$}
\label{sec:RK}
    \begin{equation}
   RK= \begin{pmatrix}
        \cdot&p&p&\cdot \\
        \cdot&\cdot&k&q\\
        \cdot&k&\cdot&q\\
        \cdot&\cdot&\cdot&\cdot
    \end{pmatrix}.
    \end{equation}
    
    \paragraph{Intersection of constant models}
    For $p=q=0$ one finds a special case of $RD$ (\ref{sec:RD}).
    
\paragraph{\textcolor{SeaGreen}{Non-constant Models}}
\begin{center}
    \begin{tikzpicture}
 \draw[SeaGreen!90, very thick, ->] (0,0) -- (4,0) ;
  \draw[SeaGreen!90,very thick, ->] (0,0) -- (4,-1) ;
   \draw[SeaGreen!90,very thick, ->] (0,0) -- (4,1) ;
   \filldraw[color=SeaGreen!90, fill=SeaGreen!25,  thick ](0,0) circle (0.5);
    \draw (0,0) node {${RK}$};
    \draw(4.5,1) circle (0.5);
    \draw (4.5,1) node {$\mathcal{R}^G$} ;
    \draw(4.5,0) circle (0.5);
    \draw (4.5,0) node {$\mathcal{R}^K$} ;
     \draw(4.5,-1) circle (0.5);
    \draw (4.5,-1) node {$\mathcal{R}^R$} ;
\end{tikzpicture}
\end{center}

\paragraph{$\mathcal{R}^K$}
 \begin{equation}
 \label{RK}
  \mathcal{R}^K:= \begin{pmatrix}
        \cdot&p &p e^{g(u)-g(v)}&f(u,v)\\
        \cdot&\cdot&k e^{g(u)-g(v)}&q e^{g(u)-g(v)}\\
        \cdot&k &\cdot&q \\
        \cdot&\cdot&\cdot&\cdot
    \end{pmatrix}.
    \end{equation}
    Found for generic coefficients.
    
    \paragraph{$\mathcal{R}^R$}
   One finds \eqref{RR} for $p=k=0$ (or when transposed for $q=k=0$).
    
    \paragraph{$\mathcal{R}^A$}
    One finds a model similar to \eqref{RA} can be found for $k=0$.
    
\subsection{$RL$}
\label{sec:RL}
 \begin{equation}
   RL= \begin{pmatrix}
        1&\cdot&\cdot&\cdot\\
        \cdot&\cdot&\cdot&1\\
        \cdot&1&\cdot&\cdot\\
        \cdot&\cdot&\cdot&1
    \end{pmatrix}.
    \end{equation}
    
\paragraph{\textcolor{Aquamarine}{Non-constant solutions}}
\begin{center}
    \begin{tikzpicture}
 \draw[Aquamarine!100, very thick, ->] (0,0) -- (4,0) ;
   \filldraw[color=Aquamarine!100, fill=Aquamarine!20,  thick ](0,0) circle (0.5);
   \draw(4.5,0) circle (0.5);
    \draw (0,0) node {${RL}$};
    \draw (4.5,0) node {$\mathcal{R}^L$} ;  
\end{tikzpicture}
\end{center}

\paragraph{$\mathcal{R^L}$}
  \begin{equation}
  \label{RL}
   \mathcal{R^L}(u,v):= \begin{pmatrix}
        1&\cdot&\cdot&\cdot\\
        \cdot&\cdot&\cdot&1\\
        \cdot&e^{u-v}&\cdot&1-e^{u-v}\\
        \cdot&\cdot&\cdot&1
    \end{pmatrix}.
    \end{equation}
    
\subsection{$RM$}
\label{sec:RM}
 \begin{equation}
     RM= \begin{pmatrix}
        1&\cdot&\cdot&\cdot\\
        \cdot&\cdot &1&\cdot\\
        \cdot&1&\cdot &\cdot\\
        \cdot&\cdot&\cdot&\cdot
    \end{pmatrix}.
\end{equation}

\paragraph{\textcolor{TealBlue}{Non-constant Models}}
\begin{center}
    \begin{tikzpicture}
 \draw[TealBlue!100, thick, ->] (0,0) -- (4,0) ;
   \filldraw[color=TealBlue!100, fill=TealBlue!20,  thick ](0,0) circle (0.5);
   \draw(4.5,0) circle (0.5);
    \draw (0,0) node {${RM}$};
    \draw (4.5,0) node {$\mathcal{R}^M$} ;  
\end{tikzpicture}
\end{center}

\paragraph{$\mathcal{R}^M$}
    \begin{equation}
    \label{RM}
   \mathcal{R}^M(u,v):= \begin{pmatrix}
        1&\cdot&\cdot&\cdot\\
        \cdot&\cdot&e^{g(u)-g(v)}&\cdot\\
        \cdot&e^{f(u)-f(v)}&\cdot&\cdot\\
        \cdot&\cdot&\cdot&\cdot
    \end{pmatrix}.
    \end{equation}

\subsection{$RN$}
\label{sec:RN}
    \begin{equation}
     RN= \begin{pmatrix}
        \cdot&(q-k)(p^2-k^2)&(q+k)(p^2-k^2)&s\\
        \cdot&\cdot &\cdot&(q^2-k^2)(p+k)\\
        \cdot&\cdot&\cdot &(q^2-k^2)(p-k)\\
        \cdot&\cdot&\cdot&\cdot
    \end{pmatrix}.
\end{equation}

\paragraph{Intersection of constant models}
$(p=q=0)=(p=q,k=0)$ lead back to  $RB(k=0,p=-q)$ (\ref{sec:RB}). $(k=0,p=q)$ leads back to  $RB(k=0,p=q)$ (\ref{sec:RB}).
($s=k=0$) leads back to se $RK(k=0)$ (\ref{sec:RK}).
For $k=p=q$ we find $RB(k=p=q=0)$ (\ref{sec:RB}).
Setting either $p$ or $q$ equal to either $k$ or $-k$, leads to $RP(k=0)$ (\ref{sec:RP})(up to rotation and or permutation).
\\
These cases will not be discussed again.

\paragraph{\textcolor{Cyan}{Non-constant Models}}
\begin{center}
    \begin{tikzpicture}
 \draw[Cyan!90, thick, ->] (0,0) -- (4,0) ;
 \draw[Cyan!90, thick, ->] (0,0) -- (4,-1) ;
  \draw[Cyan!90, thick, ->] (0,0) -- (4,-2) ;
 \draw[Cyan!90, thick, ->] (0,0) -- (4,1) ;
   \filldraw[color=Cyan!100, fill=Cyan!20, very thick ](0,0) circle (0.5);
    \draw (0,0) node {${RN}$};
       \draw(4.5,0) circle (0.5);
    \draw (4.5,0) node {$\mathcal{R}^{G}$} ;  
      \draw(4.5,-1) circle (0.5);
    \draw (4.5,-1) node {$\mathcal{R}^{K}$} ;  
 \draw(4.5,1) circle (0.5);
    \draw (4.5,1) node {$\mathcal{R}^{D}$} ;
          \draw(4.5,-2) circle (0.5);
    \draw (4.5,-2) node {$\mathcal{R}^{Q}$} ;  
\end{tikzpicture}
\end{center}

\paragraph{$\mathcal{R}^Q$}
 \begin{equation}
     \mathcal{R}^Q= \begin{pmatrix}
        \cdot&(q-k)(p^2-k^2)&(q+k)(p^2-k^2)&f(u,v)\\
        \cdot&\cdot &\cdot&(q^2-k^2)(p+k)\\
        \cdot&\cdot&\cdot &(q^2-k^2)(p-k)\\
        \cdot&\cdot&\cdot&\cdot
    \end{pmatrix}.
\end{equation}
Found for generic coefficients.

\paragraph{$\mathcal{R}^{D}$}
One finds a models similar to \eqref{RD} for  $(k=0, p=-q)$.

\paragraph{$\mathcal{R}^{K}$}
One finds \eqref{RK} for  $(k=0)$.

\paragraph{$\mathcal{R}^{G}$}
One finds \eqref{RK} for  $(k=0)$.

\subsection{$RO$}
\label{sec:RO}
 \begin{equation}
     RO= \begin{pmatrix}
        1&1&1&\cdot\\
        \cdot&\cdot &\cdot&\cdot\\
        \cdot&\cdot&\cdot &\cdot\\
        \cdot&\cdot&\cdot&1
    \end{pmatrix}.
\end{equation}
No non-constant models.

\subsection{$RP$}
\label{sec:RP}
    \begin{equation}
     RP= \begin{pmatrix}
        \cdot&p&\cdot&q\\
        \cdot&\cdot &\cdot&\cdot\\
        \cdot&k&\cdot &\cdot\\
        \cdot&\cdot&\cdot&\cdot
    \end{pmatrix}.
\end{equation}

\paragraph{Intersection of constant models}
For $p=k=0$ we find the special case $RB(k=q=p=0)$ (\ref{sec:RB}).  For $p=q=0$ we find the special case $RC(k=0)$ (\ref{sec:RC}).
These cases are not discussed again.

\paragraph{\textcolor{RoyalBlue}{Non-constant Models}}
\begin{center}
    \begin{tikzpicture}
 \draw[RoyalBlue!90, very thick, ->] (0,0) -- (4,0) ;
  \draw[RoyalBlue!90, very thick, ->] (0,0) -- (4,-1) ;
   \draw[RoyalBlue!90, very  thick, ->] (0,0) -- (4,1) ;
   \filldraw[color=RoyalBlue!100, fill=RoyalBlue!10, very thick ](0,0) circle (0.5);
   \draw(4.5,0) circle (0.5);
    \draw (0,0) node {${RP}$};
    \draw (4.5,0) node {$\mathcal{R}^P$} ;  
   \draw(4.5,-1) circle (0.5);
    \draw (4.5,-1) node {$\mathcal{R}^R$} ;
    \draw(4.5,1) circle (0.5);
    \draw (4.5,1) node {$\mathcal{R}^O$} ;
\end{tikzpicture}
\end{center}

\paragraph{$\mathcal{R}^P$}
 \begin{equation}
 \label{RP}
    \mathcal{R}^P:= \begin{pmatrix}
        \cdot&f_1(u,v)&\cdot&f_2(u,v)\\
        \cdot&\cdot &\cdot&f_3(u,v)\\
        \cdot&\cdot&\cdot &\cdot\\
        \cdot&\cdot&\cdot&\cdot
    \end{pmatrix}.
\end{equation}
Found for $k=0$.

\paragraph{$\mathcal{R}^R$}
One finds \eqref{RR} for $k=0$.

\paragraph{$\mathcal{R}^O$}
One finds \eqref{RO} for generic coefficients.

\subsection{$RQ$}
\label{sec:RQ}
    \begin{equation}
     RQ= \begin{pmatrix}
        \cdot&p&\cdot&\cdot\\
        \cdot&\cdot &\cdot&q\\
        \cdot&\cdot&\cdot &\cdot\\
        \cdot&\cdot&\cdot&\cdot
    \end{pmatrix}.
\end{equation}

\paragraph{Intersection of constant Models}
For $q=0$ one finds the special case $RP(q=k=0)$ (\ref{sec:RP}). Similarly, setting  $p=0$ is $RP(q=k=0)$ (\ref{sec:RP}) up to a permutation and a rotation around the off diagonal.\\
These cases are not discussed again.

\paragraph{\textcolor{Periwinkle}{Non-constant Models}}
\begin{center}
    \begin{tikzpicture}
 \draw[Periwinkle!90, very thick, ->] (0,0) -- (4,0) ;   \filldraw[color=Periwinkle!100, fill=Periwinkle!20, thick ](0,0) circle (0.5);
    \draw (0,0) node {${RQ}$}; 
   \draw(4.5,0) circle (0.5);
    \draw (4.5,0) node {$\mathcal{R}^P$} ;
\end{tikzpicture}
\end{center}

\paragraph{$\mathcal{R}^P$}
One finds \eqref{RP} generically.

\subsection{$RR$}
 \label{sec:RR}
    \begin{equation}
     RR= \begin{pmatrix}
        \cdot&p&\cdot&\cdot\\
        \cdot&\cdot &\cdot&\cdot\\
        \cdot&\cdot&\cdot &q\\
        \cdot&\cdot&\cdot&\cdot
    \end{pmatrix}.
\end{equation}

\paragraph{Intersection of constant models}
For $(q=0)$ we find the special case $RP(p=k=0)$ (\ref{sec:RP}). By rotation around the off diagonal, which is an allowed similarity transformation, we find $RR(p=0)$ (\ref{sec:RR}). These cases will not be discussed again.

\paragraph{\textcolor{Orchid}{Non-constant Models}}
\begin{center}
    \begin{tikzpicture}
 \draw[Orchid!90, very thick, ->] (0,0) -- (4,0) ;
 \draw[Orchid!90, very thick, ->] (0,0) -- (4,1) ;
   \filldraw[color=Orchid!100, fill=Orchid!20, very thick ](0,0) circle (0.5);
   \draw(4.5,0) circle (0.5);
    \draw (0,0) node {${RR}$};
    \draw (4.5,0) node {$\mathcal{R}^O$} ;  
   \draw(4.5,1) circle (0.5);
    \draw (4.5,1) node {$\mathcal{R}^G$} ;
\end{tikzpicture}
\end{center}

\paragraph{$\mathcal{R}^G$}
One finds \eqref{RG} for generic coefficients.

\paragraph{$\mathcal{R}^O$}
One finds \eqref{RO} for generic coefficients.

\subsection{$RS$}
\label{sec:RS}
    \begin{equation}
     R^S= \begin{pmatrix}
        \cdot&p&q&\cdot\\
        \cdot&\cdot &\cdot&\cdot\\
        \cdot&\cdot&\cdot &\cdot\\
        \cdot&\cdot&\cdot&\cdot
    \end{pmatrix}.
\end{equation}

\paragraph{Intersections of constant Models}
Note that permutation of this matrix, simply switches $p$ and $q$. For $q=0$ ($p=0$) we find the special case $RP(p=k=0)$ (same but permuted) (\ref{sec:RP}).
For $p=-q$ we find the special case $RE(k=q=0)$ (\ref{sec:RE}).\\
These cases will not be discussed again.

\paragraph{\textcolor{Thistle}{Non-constant Models}}
\begin{center}
\begin{tikzpicture}
 \draw[Thistle!90, very thick, ->] (0,0) -- (4,0) ;
   \filldraw[color=Thistle!100, fill=Thistle!30, very thick ](0,0) circle (0.5);
    \draw (0,0) node {${RS}$};
    \draw(4.5,0) circle (0.5);
    \draw (4.5,0) node {$\mathcal{R}^R$} ;
\end{tikzpicture}
\end{center}

\paragraph{$\mathcal{R}^R$}
One finds \eqref{RR} for generic coefficients.

\subsection{$RT$}
\label{sec:RT}
   \begin{equation}
     R^T= \begin{pmatrix}
        \cdot&\cdot&\cdot&\cdot\\
        \cdot&p &q&\cdot\\
        \cdot&\cdot&\cdot &\cdot\\
        \cdot&\cdot&\cdot&\cdot
    \end{pmatrix}.
\end{equation}

\paragraph{Intersection of constant models}
Setting $q=0$ one find a special case of $RA$ (\ref{sec:RA}). Setting $p=0$ one find a special case of $RC(k=0)$ (\ref{sec:RC})

\paragraph{\textcolor{Lavender}{Non-constant Models}}
\begin{center}
    \begin{tikzpicture}
 \draw[Lavender!100, very  thick, ->] (0,0) -- (4,0) ;
   \filldraw[color=Lavender!100, fill=Lavender!20, very thick ](0,0) circle (0.5);
   \draw(4.5,0) circle (0.5);
    \draw (0,0) node {${RT}$};
    \draw (4.5,0) node {$\mathcal{R}^S$} ;   
\end{tikzpicture}
\end{center}

\paragraph{$\mathcal{R}^S$}
One finds \eqref{RS} for generic coefficients.

\section{$\tilde{\mathcal{R}}$}
The most general form that solves the fundamental commutation relation for the non-regular solutions of the Yang--Baxter equation. We present them for generic values of the parameters.

\subsection{$\tilde{\mathcal{R}}_{c}$}
\begin{align}
 \tilde{\mathcal{R}}_c(u,v)=
\begin{pmatrix}
        g_1(u,v)&\cdot&\cdot&g_3(u,v)\\
        \cdot&g_2(u,v)&\frac{f(u,0)}{f(v,0)}(g_1(u,v)+g_2(u,v))&\cdot\\
        \cdot&\frac{f(v,0)}{f(u,0)} (g_1(u,v) + g_2(u,v))&g_2(u,v)&\cdot\\
       \cdot&\cdot&\cdot&g_1(u,v)
    \end{pmatrix}.
\end{align}

\subsection{$\tilde{\mathcal{R}}_{d}$}
\begin{align}
    \tilde{\mathcal{R}}_d(u,v)=
   \begin{pmatrix}
        g_1(u,v)&\cdot&\cdot&\cdot\\
        \cdot&g_2(u,v)&e^{-u+v}(g_1(u,v)-\frac{g_2(u,v)}{q})&\cdot\\
        \cdot&g_3(u,v)&\frac{1}{q}(g_1(u,v)-e^{-u+v}g_3(u,v))&\cdot\\
       \cdot&\cdot&\cdot&-\frac{g_2(u,v)}{q}+e^{-u+v}g_3(u,v)
    \end{pmatrix}.
\end{align}

\subsection{$\tilde{\mathcal{R}}_{e}$}
\begin{align}
    \tilde{\mathcal{R}}_e(u,v)=
   \begin{pmatrix}
        g_1(u,v)&\cdot&\cdot&g_4(u,v)\\
        \cdot&g_2(u,v)&g_3(u,v)&\cdot\\
        \cdot&\frac{f(v,0)}{f(u,0)} (g_1(u,v) - g_2(u,v))&g_1(u,v)-\frac{f(v,0)}{f(u,0)}g_3(u,v)&\cdot\\
       \cdot&\cdot&\cdot&-g_2(u,v)+\frac{f(v,0)}{f(u,0)}g_3(u,v)
    \end{pmatrix}.
\end{align}

\subsection{$\tilde{\mathcal{R}}_{f}$}
\begin{equation}
   \tilde{\mathcal{R}}_f(u,v)= \begin{pmatrix}
        f_1(u,v)&\cdot&\cdot&f_4(u,v)\\
        \cdot&f_2(u,v)&f_3(u,v)&\cdot\\
        \cdot&F_1&F_2&\cdot\\
        F_3&\cdot&\cdot&F_4
    \end{pmatrix},
\end{equation}
\begin{equation}
    \begin{aligned}
        F_1=\frac{2}{k}f_4(u,v)+ e^{v-u}(f_1(u,v)+f_2(u,v)),\\
        F_2=-f_1(u,v) +e^{v-u}(f_3(u,v)+\frac{2}{k}f_4(u,v)),\\
        F_3=\frac{2}{k}(e^{v-u}(f_1(u,v)+f_2(u,v))-f_3(u,v)),\\
        F_4=f_2(u,v)+e^{v-u}(\frac{2}{k}f_4(u,v)+f_3(u,v)).
    \end{aligned}
\end{equation}

\subsection{$\tilde{R}_{g}$}
\begin{equation}
   \tilde{R}_g(u,v)= \left(
\begin{array}{cccc}
 f_1(u,v) & 0 & 0 & f_2(u,v) \\
 0 & H_1(u,v) & H_2(u,v) & 0 \\
 0 & H_2 (u,v)&-H_1(u,v) & 0 \\
 -f_2(u,v) & 0 & 0 & f_1(u,v) \\
\end{array}
\right),
\end{equation}
\begin{equation}
\begin{aligned}
   H_1(u,v)=\frac{e^{2 u} f_1(u,v)-e^{2 v} f_1(u,v)-2 e^{u+v} f_2(u,v)}{e^{2 u}+e^{2 v}}, \\
    H_2(u,v)=\frac{2 e^{u+v} f_1(u,v)+e^{2 u} f_2(u,v)-e^{2 v} f_2(u,v)}{e^{2 u}+e^{2 v}}.
\end{aligned}
\end{equation}

\subsection{$\tilde{\mathcal{R}}_{k}$}
\begin{equation}
    \tilde{\mathcal{R}}_k(u,v)=\begin{pmatrix}
       g_1(u,v) & g_2(u,v) & g_3(u,v) & G_5(u,v), \\
 g_4(u,v) & G_6(u,v) & g_6(u,v) & g_5(u,v) ,\\
 g_7(u,v) & g_8(u,v) & G_7(u,v) & g_{9}(u,v), \\
G_1(u,v) &G_2(u,v) & G_3(u,v) & G_4(u,v)
    \end{pmatrix}.
\end{equation}
\begin{equation}
\begin{aligned}
    G_1(u,v)=  -g_1(u,v)-g_2(u,v)-g_7(u,v), \\
    G_2(u,v)= -g_2(u,v)-g_4(u,v)+g_5(u,v)-g_6(u,v)-g_8(u,v),\\
    G_3(u,v)=-g_3(u,v)-g_5(u,v)-g_7(u,v)+g_8(u,v)-g_9(u,v),\\
   G_4(u,v)= g_1(u,v)-g_2(u,v)-g_3(u,v)-g_4(u,v)-g_9(u,v),\\
 G_5(u,v)  -g_1(u,v)+g_2(u,v)+g_3(u,v),\\
 G_6(u,v)=g_4(u,v)-g_5(u,v)+g_4(u,v),\\
 G_7(u,v)=g_7(u,v)-g_8(u,v)+g_{9}(u,v).
\end{aligned}
\end{equation}

\subsection{$\tilde{\mathcal{R}}_{i}$}
\begin{equation}
   \tilde{\mathcal{R}}_{i}(u,v)=
\begin{pmatrix} 
1& 0 & 0 & 0 \\
 0 & 0 &  g_1(u,v) & 1-g_1(u,v) \\
 0 & e^{u-v} & 0 & 1-e^{u-v}\\
 0 & 0 & 0 & 1\\
\end{pmatrix}.
\end{equation}

\subsection{$\tilde{\mathcal{R}}_{j}$}
\begin{equation}
   \tilde{\mathcal{R}}_{j}(u,v)=
\begin{pmatrix} 
1& 0 & 0 & 0 \\
 0 & 0 &  e^{u-v} & 0 \\
 0 & e^{u-v}& 0 & 0 \\
 0 & 0 & 0 & g_1(u,v) \\
\end{pmatrix}.
\end{equation}

\subsection{$\tilde{\mathcal{R}}_{l}$}
Any general matrix solves the Lax equation for $\mathcal{R}_l$.

\subsection{$\tilde{\mathcal{R}}_{m}$}
\begin{equation}
\tilde{\mathcal{R}}_{n}=
    \begin{pmatrix} 
1& g_1(u,v) & g_2(u,v) & g_3(u,v) \\
 0 & g_4(u,v)& g_5(u,v) & g_6(u,v) \\
 0 &\frac{f_1(u,0) f_3(v,0) }{f_1(v,0) f_3(u,0)} & 0 & -\frac{ ({f_2}(v,0) {f_3}(u,0)-{f_2}(u,0) {f_3}(v,0))}{{f_1}(v,0) {f_3}(u,0)} \\
 0 & 0 & 0 & 1\\
 \end{pmatrix}.
\end{equation}

\subsection{$\tilde{\mathcal{R}}_{m}$}
\begin{equation}
\tilde{\mathcal{R}}_{m}=
    \begin{pmatrix} 
g_1(u,v)& g_2(u,v)& g_3(u,v)& g_4(u,v)\\
 0 & g_5(u,v) & g_1(u,v)-\frac{(k-p) g_5(u,v)}{k-q} & g_6(u,v) \\
 0 & g_1(u,v)-\frac{(k-p) g_5(u,v)}{k-q} & \frac{(k-p)^2 g_5(u,v)}{(k-q)^2} & F(u,v) \\
 0 & 0 & 0 & g_1(u,v) \\
 \end{pmatrix}.
\end{equation}
\begin{equation}
F(u,v)=-\frac{g_1(u,v) (p f_3(u)-p f_3(v)-q f_3(u)+q f_3(v))}{(k-p) (k-q)}+\frac{(k-p) (k+q) g_2(u,v)}{(k+p) (k-q)}-\frac{(-k-q) g_3(u,v)}{k+p}-\frac{(k-p) g_6(u,v)}{k-q}.
\end{equation}

\subsection{$\tilde{\mathcal{R}}_{o}$}
\begin{equation}
\tilde{\mathcal{R}}_{o}=
    \begin{pmatrix} 
1& g_1(u,v) & g_2(u,v) & g_3(u,v) \\
 0 & 0&\frac{f_1(u,0) f_3(v,0) }{f_1(v,0) f_3(u,0)} & -\frac{ ({f_2}(v,0) {f_3}(u,0)-{f_2}(u,0) {f_3}(v,0))}{{f_1}(v,0) {f_3}(u,0)} \\
 0 & g_4(u,v)& g_5(u,v) & g_6(u,v) \\
 0 & 0 & 0 & 1\\
 \end{pmatrix}.
\end{equation}

\begin{bibtex}[\jobname]

@misc{perk2006yangbaxterequations,
      title={Yang-Baxter Equations}, 
      author={Jacques H. H. Perk and Helen Au-Yang},
      year={2006},
      eprint={math-ph/0606053},
      archivePrefix={arXiv},
      primaryClass={math-ph},
      url={https://arxiv.org/abs/math-ph/0606053}, 
}

@article{Phys,
  title = {Some Exact Results for the Many-Body Problem in one Dimension with Repulsive Delta-Function Interaction},
  author = {Yang, C. N.},
  journal = {Phys. Rev. Lett.},
  volume = {19},
  issue = {23},
  pages = {1312--1315},
  numpages = {0},
  year = {1967},
  month = {Dec},
  publisher = {American Physical Society},
  doi = {10.1103/PhysRevLett.19.1312},
  url = {https://link.aps.org/doi/10.1103/PhysRevLett.19.1312}
}
@article{BAXTER,
title = {Partition function of the Eight-Vertex lattice model},
journal = {Annals of Physics},
volume = {70},
number = {1},
pages = {193-228},
year = {1972},
issn = {0003-4916},
doi = {https://doi.org/10.1016/0003-4916(72)90335-1},
url = {https://www.sciencedirect.com/science/article/pii/0003491672903351},
author = {Rodney J Baxter},
abstract = {The partition function of the zero-field “Eight-Vertex” model on a square M by N lattice is calculated exactly in the limit of M, N large. This model includes the dimer, ice and zero-field Ising, F and KDP models as special cases. In general the free energy has a branch point singularity at a phase transition, with an irrational exponent.}
}

@article{Maity:2024vtr,
    author = "Maity, Somnath and Singh, Vivek Kumar and Padmanabhan, Pramod and Korepin, Vladimir",
    title = "{Hietarinta's classification of $4\times 4$ constant Yang-Baxter operators using algebraic approach}",
    eprint = "2409.05375",
    archivePrefix = "arXiv",
    primaryClass = "hep-th",
    month = "9",
    year = "2024"
}

@article{Corcoran:2023zax,
    author = "Corcoran, Luke and de Leeuw, Marius",
    title = "{All regular $4 \times 4$ solutions of the Yang-Baxter equation}",
    eprint = "2306.10423",
    archivePrefix = "arXiv",
    primaryClass = "hep-th",
    month = "6",
    year = "2023"
}

@article{HIETARINTA,
title = {All solutions to the constant quantum Yang-Baxter equation in two dimensions},
journal = {Physics Letters A},
volume = {165},
number = {3},
pages = {245-251},
year = {1992},
issn = {0375-9601},
doi = {https://doi.org/10.1016/0375-9601(92)90044-M},
url = {https://www.sciencedirect.com/science/article/pii/037596019290044M},
author = {Jarmo Hietarinta},
abstract = {In this Letter we present an exhaustive list of solutions to the constant quantum Yang-Baxter equation Rk1k2j1j2Rl1k3k1j3Rl2l3k2k3=Rk2k3j2j3Rk1l3j1k3Rl1l2k1k2 in two dimensions (i.e. all indices ranging over 1, 2 with summation over repeated indices). This set of 64 equations for 16 unknowns was first reduced by hand to manageable subcases which were then solved by computer using Gröbner basis methods. For the presentation of the results we propose a canonical form based on the various trace matrices of R.}
}

@article{Garkun:2024jnp,
    author = "Garkun, Alexander. S. and Barik, Suvendu K. and Fedorov, Aleksey K. and Gritsev, Vladimir",
    title = "{New spectral-parameter dependent solutions of the Yang-Baxter equation}",
    eprint = "2401.12710",
    archivePrefix = "arXiv",
    primaryClass = "quant-ph",
    month = "1",
    year = "2024"
}

@article{deLeeuw:2020ahe,
    author = "de Leeuw, Marius and Paletta, Chiara and Pribytok, Anton and Retore, Ana L. and Ryan, Paul",
    title = "{Classifying Nearest-Neighbor Interactions and Deformations of AdS}",
    eprint = "2003.04332",
    archivePrefix = "arXiv",
    primaryClass = "hep-th",
    doi = "10.1103/PhysRevLett.125.031604",
    journal = "Phys. Rev. Lett.",
    volume = "125",
    number = "3",
    pages = "031604",
    year = "2020"
}

@article{Vieira_2018,
   title={Solving and classifying the solutions of the Yang-Baxter equation through a differential approach. Two-state systems},
   volume={2018},
   ISSN={1029-8479},
   url={http://dx.doi.org/10.1007/JHEP10(2018)110},
   DOI={10.1007/jhep10(2018)110},
   number={10},
   journal={Journal of High Energy Physics},
   publisher={Springer Science and Business Media LLC},
   author={Vieira, R. S.},
   year={2018},
   month=oct }

@book{jimbo1990yang,
  title={Yang-Baxter Equation in Integrable Systems},
  author={Jimbo, M.},
  isbn={9789812798336},
  series={Advanced series in mathematical physics},
  url={https://books.google.ie/books?id=MLjACwAAQBAJ},
  year={1990},
  publisher={World Scientific Publishing Company Pte Limited}
}

@article{de_Leeuw_2019,
   title={Classifying integrable spin-1/2 chains with nearest neighbour interactions},
   volume={52},
   ISSN={1751-8121},
   url={http://dx.doi.org/10.1088/1751-8121/ab529f},
   DOI={10.1088/1751-8121/ab529f},
   number={50},
   journal={Journal of Physics A: Mathematical and Theoretical},
   publisher={IOP Publishing},
   author={de Leeuw, Marius and Pribytok, Anton and Ryan, Paul},
   year={2019},
   month=nov, pages={505201} }
   
   @article{de_Leeuw_2021,
   title={Yang-Baxter and the Boost: splitting the difference},
   volume={11},
   ISSN={2542-4653},
   url={http://dx.doi.org/10.21468/SciPostPhys.11.3.069},
   DOI={10.21468/scipostphys.11.3.069},
   number={3},
   journal={SciPost Physics},
   publisher={Stichting SciPost},
   author={de Leeuw, Marius and Paletta, Chiara and Pribytok, Anton and Retore, Ana L. and Ryan, Paul},
   year={2021},
   month=sep }

   @article{Drinfeld:1985rx,
    author = "Drinfeld, V. G.",
    title = "{Hopf algebras and the quantum Yang-Baxter equation}",
    journal = "Sov. Math. Dokl.",
    volume = "32",
    pages = "254--258",
    year = "1985"
}

@article{Faddeev:1987ih,
    author = "Faddeev, L. D. and Reshetikhin, N. Yu. and Takhtajan, L. A.",
    title = "{Quantization of Lie Groups and Lie Algebras}",
    reportNumber = "LOMI-E-14-87",
    journal = "Alg. Anal.",
    volume = "1",
    number = "1",
    pages = "178--206",
    year = "1989"
}

@inproceedings{Faddeev:1996iy,
    author = "Faddeev, L. D.",
    title = "{How algebraic Bethe ansatz works for integrable model}",
    booktitle = "{Les Houches School of Physics: Astrophysical Sources of Gravitational Radiation}",
    eprint = "hep-th/9605187",
    archivePrefix = "arXiv",
    pages = "pp. 149--219",
    month = "5",
    year = "1996"
}
@article{Jimbo:1985zk,
    author = "Jimbo, Michio",
    title = "{A q difference analog of U(g) and the Yang-Baxter equation}",
    doi = "10.1007/BF00704588",
    journal = "Lett. Math. Phys.",
    volume = "10",
    pages = "63--69",
    year = "1985"
}

@article{gerstenhaber1997boundarysolutionsquantumyangbaxter,
      author="Murray Gerstenhaber and Anthony Giaquinto",
      title="{Boundary solutions of the quantum Yang-Baxter equation and solutions in three dimensions}", 
       year={1997},
      eprint={q-alg/9710033},
      archivePrefix={arXiv},
      primaryClass={q-alg},
      url={https://arxiv.org/abs/q-alg/9710033}, 
}

@article{KASHAEV_1993,
   title={GENERALIZED YANG-BAXTER EQUATION},
   volume={08},
   ISSN={1793-6632},
   url={http://dx.doi.org/10.1142/S0217732393003603},
   DOI={10.1142/s0217732393003603},
   number={24},
   journal={Modern Physics Letters A},
   publisher={World Scientific Pub Co Pte Lt},
   author={KASHAEV, R. M. and STROGANOV, YU. G.},
   year={1993},
   month=aug, pages={2299–2309} }
   
   @misc{tarasov1994solutionstwistedyangbaxterequation,
      title={On Solutions to the Twisted Yang-Baxter equation}, 
      author={Vitaly Tarasov},
      year={1994},
      eprint={hep-th/9403011},
      archivePrefix={arXiv},
      primaryClass={hep-th},
      url={https://arxiv.org/abs/hep-th/9403011}, 
}
@article{Etingof_1998,
   title={Geometry and Classificatin of Solutions of the Classical Dynamical Yang-Baxter Equation},
   volume={192},
   ISSN={1432-0916},
   url={http://dx.doi.org/10.1007/s002200050292},
   DOI={10.1007/s002200050292},
   number={1},
   journal={Communications in Mathematical Physics},
   publisher={Springer Science and Business Media LLC},
   author={Etingof, Pavel and Varchenko, Alexander},
   year={1998},
   month=mar, pages={77–120} }
   
@article{TakFaddeev,
doi = {10.1070/RM1979v034n05ABEH003909},
url = {https://dx.doi.org/10.1070/RM1979v034n05ABEH003909},
year = {1979},
month = {oct},
publisher = {},
volume = {34},
number = {5},
pages = {11},
author = {L. A. Takhtajan and  L. D. Faddeev},
title = {THE QUANTUM METHOD OF THE INVERSE PROBLEM AND THE HEISENBERG XYZ MODEL},
journal = {Russian Mathematical Surveys},
abstract = {CONTENTS Introduction   § 1. Classical statistical physics on a two-dimensional lattice and quantum mechanics on a chain   § 2. Connection with the inverse problem method   § 3. The six-vertex model   § 4. Generating vectors and permutation relations   § 5. The general Bethe Ansatz   § 6. Integral equations  Conclusion  Appendix 1  Appendix 2  References}
}

@article{Krichever1981BaxtersEA,
  title={Baxter's equations and algebraic geometry},
  author={Igor Moiseevich Krichever},
  journal={Functional Analysis and Its Applications},
  year={1981},
  volume={15},
  pages={92-103},
  url={https://api.semanticscholar.org/CorpusID:121076970}
}

@article{Padmanabhan:2024zma,
    author = "Padmanabhan, Pramod and Korepin, Vladimir",
    title = "{Solving the Yang-Baxter, tetrahedron and higher simplex equations using Clifford algebras}",
    eprint = "2404.11501",
    archivePrefix = "arXiv",
    primaryClass = "hep-th",
    doi = "10.1016/j.nuclphysb.2024.116664",
    journal = "Nucl. Phys. B",
    volume = "1007",
    pages = "116664",
    year = "2024"
}

@article{Dragovich,
    author = "Dragovich, V.I.",
    title = "{Solutions to the Yang equation with rational spectral curves}",
    journal = "Algebra i Analiz",
    volume = "4",
    pages = "104–116",
    year = "1992"
}

@book{Korepin_Bogoliubov_Izergin_1993, place={Cambridge}, series={Cambridge Monographs on Mathematical Physics}, title={Quantum Inverse Scattering Method and Correlation Functions}, publisher={Cambridge University Press}, author={Korepin, V. E. and Bogoliubov, N. M. and Izergin, A. G.}, year={1993}, collection={Cambridge Monographs on Mathematical Physics}} 

\end{bibtex}

\bibliographystyle{nb}
\bibliography{\jobname}

\end{document}